\documentclass[a4paper,11pt,oneside,fleqn]{article}

\usepackage[mathscr]{eucal}
\usepackage{amsmath,amssymb,amsthm}
\usepackage{graphicx}

\allowdisplaybreaks
\flushbottom

\setlength{\textwidth}{160.0mm}
\setlength{\textheight}{245.0mm}
\setlength{\oddsidemargin}{0mm}
\setlength{\evensidemargin}{0mm}
\setlength{\topmargin}{-15mm} 
\setlength{\parindent}{5.0mm}

\bibliographystyle{rop}

\makeatletter
\long\def\@makecaption#1#2{%
  \vskip\abovecaptionskip\footnotesize
  \sbox\@tempboxa{#1. #2}%
  \ifdim \wd\@tempboxa >\hsize
    #1. #2\par
  \else
    \global \@minipagefalse
    \hb@xt@\hsize{\hfil\box\@tempboxa\hfil}%
  \fi
  \vskip\belowcaptionskip}
\makeatother

\marginparwidth=17mm \marginparsep=1mm \marginparpush=4mm
\newcommand{\todo}[1][\null]{\ensuremath{\clubsuit}}

\newcommand{\noprint}[1]{}
\newcommand{\checked}[1][\null]{\ensuremath{\boldsymbol{\surd}}}

\newcommand{\p}{\partial}
\newcommand{\const}{\mathop{\rm const}\nolimits}

\newcommand{\sgn}{\mathop{\rm sgn}\nolimits}
\newcommand{\lsemioplus}{\mathbin{\mbox{$\lefteqn{\hspace{.77ex}\rule{.4pt}{1.2ex}}{\in}$}}}
\newcommand{\spanindex}{{\mbox{\tiny$\langle\,\rangle$}}}
\newcommand{\EqOrd}{r}

\newtheorem{theorem}{Theorem}
\newtheorem{lemma}[theorem]{Lemma}
\newtheorem{corollary}[theorem]{Corollary}
\newtheorem{proposition}[theorem]{Proposition}
\newtheorem*{problem*}{Problem}
{\theoremstyle{definition}
\newtheorem{definition}[theorem]{Definition}

\newtheorem{remark}[theorem]{Remark}
\newtheorem*{remark*}{Remark}
}

\begin{document}

\par\noindent {\LARGE\bf
Group classification of linear evolution equations
\par}

{\vspace{4mm}\par\noindent {\bf Alexander Bihlo$^\dag$ and Roman O.\ Popovych$^\ddag\, ^\S$,
} \par\vspace{2mm}\par}

{\vspace{2mm}\par\noindent {\it
$^{\dag}$~Department of Mathematics and Statistics, Memorial University of Newfoundland,\\ \phantom{$^{\dag}$~}St.\ John's (NL) A1C 5S7, Canada
}}
{\vspace{2mm}\par\noindent {\it
$^\ddag$~Wolfgang Pauli Institut, Oskar-Morgenstern-Platz 1, 1090 Wien, Austria
}}
{\vspace{2mm}\par\noindent {\it
$^{\S}$~Institute of Mathematics of NAS of Ukraine, 3 Tereshchenkivska Str., 01601 Kyiv, Ukraine
}}

{\vspace{2mm}\par\noindent {\it
\textup{E-mail:} abihlo@mun.ca, rop@imath.kiev.ua
}\par}

\vspace{4mm}\par\noindent\hspace*{10mm}\parbox{140mm}{\small
The group classification problem 
for the class of (1+1)-dimensional linear $\EqOrd$th order evolution equations 
is solved for arbitrary values of $\EqOrd>2$. 
It is shown that a related maximally gauged class of homogeneous linear evolution equations 
is uniformly semi-normalized with respect to linear superposition of solutions 
and hence the complete group classification can be obtained using the algebraic method. 
We also compute exact solutions for equations from the class under consideration 
using Lie reduction and its specific generalizations for linear equations.
}\par\vspace{2mm}

\section{Introduction}

The investigation of higher-order evolution equations has been the subject of a considerable body of literature in recent years. Such equations naturally occur in the study of real-world problems, including water waves and solitary waves~\cite{grim02a,ito80a,kich92a}, thin film models~\cite{bert00a,qu06a}, image processing~\cite{you00a}
as well as integrable models~\cite{olve77a}. 

While higher-order evolution equations typically arising in applications are nonlinear, there is substantial interest in studying higher-order linear evolution equations as well. In particular, the linearization of nonlinear evolution equations, which plays a key role in perturbation theory and stability analysis of these equations, leads to linear evolution equations of the same order with quite general variable coefficients.  
The study of a class of such linear equations within the framework of group analysis of differential equations is still a nontrivial problem 
since, in general, equation coefficients, which are interpreted as arbitrary elements of the class, are functions of several variables. 
Even the main problems of group analysis of differential equations--on Lie symmetries and on equivalence of equations--have been properly solved for (1+1)-dimensional evolution equations only in the case of order two~\cite{lie81Ay,moro03a,ovsi82Ay}. 

Although symmetry methods play a more important role in the study of nonlinear differential equations than of linear ones, 
there are many papers devoted to various aspects of symmetry analysis of general linear systems of differential equations and their specific classes. 
This includes, in particular, 
general constraints imposed by the linearity of a system of differential equations on its Lie and point symmetries~\cite{blum90a,gray14a,gray15a}; 
structure of algebras of generalized symmetries~\cite{shap92a};
a specific advanced method for generating new solutions from known ones using Lie symmetries \cite{broa99a,fush96a}; 
the description of conservation laws and potential symmetries of (1+1)-dimensional second-order linear evolution equations~\cite{popo08Ay} 
as well as reduction operators and nonclassical reductions of these equations \cite{fush92a,popo08a};
structure of Lie invariance algebras of linear systems of ordinary differential equations, group classification of such systems and admissible transformations between them \cite{boyk15a,camp11a,gonz88a,mele14a,mkhi15a,samo99a}.

In this paper we solve the group classification problem 
for the class of (1+1)-dimensional (in general, inhomogeneous) $\EqOrd$th order ($\EqOrd>2$) linear evolution equations of the form
\begin{equation}\label{eq:GeneralLinearEvolutionEquation}
 u_t=A^k(t,x)u_k + B(t,x),\quad A^\EqOrd\ne0. 
\end{equation}
Here and in the following, $\EqOrd$ is assumed to be an arbitrary but fixed integer greater than~2,  
and the summation over the repeated index~$k$ from $0$ to $\EqOrd$ is implied. 
$u_k=\partial_k u=\partial^k u/\partial x^k$, where, by definition, $u_0=u$. 
The functions $A^k=A^k(t,x)$ and $B=B(t,x)$ are smooth (e.g., analytic) functions of their arguments. 
The underlying field is real or complex. The consideration is within the local framework.

The group classification of equations of the form~\eqref{eq:GeneralLinearEvolutionEquation} for low values of~$\EqOrd$ was the subject of several investigations. 
The case $\EqOrd=1$ is trivial since then the class~\eqref{eq:GeneralLinearEvolutionEquation} is the orbit of the degenerate equation $u_t=0$ 
with respect to the corresponding equivalence group. 
The case $\EqOrd=2$ is specific among nontrivial values of~$\EqOrd$. It was exhaustively studied in the course of group classification of general second-order linear partial differential equations with two independent variables~\cite{lie81Ay,ovsi82Ay}. See also the review in~\cite[Section~2]{popo08Ay}. 
As specific and well studied, this case is excluded from the further consideration. 
Lie symmetries of third-order linear evolution equations were classified in~\cite{gung04a}. 
Most recently, fourth-order equations of the form~\eqref{eq:GeneralLinearEvolutionEquation} were considered in~\cite{huan12a}. 
The aim of this paper is to completely study Lie symmetries of linear evolution equations of arbitrary fixed order~$\EqOrd>2$ 
using more advanced methods of group classification.
This enhances the above results for orders three and four and extends them to an arbitrary (nontrivial) order.

Moreover, the class~\eqref{eq:GeneralLinearEvolutionEquation} is a subclass of the class of general (1+1)-dimensional evolution equations of order~$\EqOrd$.
In~\cite{maga93Ay}, Magadeev studied contact symmetries of such equations with $\EqOrd>1$ up to contact equivalence. It was proved that if an evolution equation is not linearizable by a contact transformation then its contact symmetry algebra is of dimension not greater than $\EqOrd+5$. Magadeev classified, up to contact equivalence, algebras of vector fields in the space of $t,x,u,u_x$, which can serve as contact symmetry algebras for some evolution equations. At the same time, he did not present the form of equations, which admit these algebras. It is obvious that equations that are linearizable by contact transformations admit infinite-dimensional contact symmetry algebras. It is known that contact transformations between fixed linear evolution equations are prolongations of point transformations~\cite{vane14a}. This is why the classification of contact symmetries of such equations up to contact equivalence degenerates to the classification of point symmetries of such equations up to point equivalence. As a result, the group classification of the class~\eqref{eq:GeneralLinearEvolutionEquation} completes Magadeev's studies of contact symmetries of evolution equations.  

The further consideration is the following. 
In Section~\ref{sec:AlgebraicGroupClassificationMethod} we present a brief review of the algebraic method of group classification. 
Special attention is paid to the version of this method for uniformly semi-normalized classes of differential equations, which is relevant for this paper.
Section~\ref{sec:EquivalenceGroupoidLinearEvolutionEqs} is devoted to 
the computation of the equivalence groupoids and the equivalence groups
of the class~\eqref{eq:GeneralLinearEvolutionEquation} and some of its gauged subclasses. 
We show that the class~\eqref{eq:GeneralLinearEvolutionEquation} 
and two of its subclasses singled out by gauging subleading coefficients, $A^\EqOrd=1$ and $(A^\EqOrd,A^{\EqOrd-1})=(1,0)$,
are normalized 
but the corresponding three subclasses of homogeneous equations 
are merely uniformly semi-normalized with respect to linear superposition of solutions. 
The group classification of the class~\eqref{eq:GeneralLinearEvolutionEquation} 
reduces to that of the subclass associated with the gauge $(A^\EqOrd,A^{\EqOrd-1},B)=(1,0,0)$ 
being maximal among general gauges that can be imposed using families of equivalence transformations,
and the property of the above specific uniform semi-normalization of this subclass justifies
using the above special version of the algebraic method for the group classification.
The preliminary analysis of Lie symmetries for equations from the class~\eqref{eq:GeneralLinearEvolutionEquation} 
is presented in Section~\ref{sec:DetEqsLieSymmetriesLinearEvolutionEqs}. 
In the main Section~\ref{sec:GroupClassificationLinearEvolutionEqs} 
we completely solve the group classification problem for the class~\eqref{eq:GeneralLinearEvolutionEquation}. 
Section~\ref{sec:ExactSolutionsLinEvolEqs} is devoted to the computation of a few exact solutions for equations from this class 
using Lie reduction and various symmetry-based methods that are specific for linear equations. 
In the final Section~\ref{sec:ConclusionLinearEvolutionEqs} we give some concluding remarks.

\section{Algebraic method of group classification}\label{sec:AlgebraicGroupClassificationMethod}

\looseness=-1
We briefly review some essential notions and results required for carrying out 
the group classification of the class~\eqref{eq:GeneralLinearEvolutionEquation} using the algebraic method. 
This method can be traced back to Lie's work on symmetries of ordinary differential equations. 
Its modern version, which is  in large parts based on results of~\cite{popo06Ay},
was explicitly formulated and applied for several models in~\cite{bihl11Dy,popo10Cy,popo10Ay}. 
See also~\cite{boyk15a,kuru16a} for further developments and applications of the algebraic method. 
Different techniques within the framework of the algebraic method of group classification 
were used in~\cite{basa01Ay,gagn93a,gaze92a,gung04a,huan12a,maga93Ay,mkhi15a,zhda99Ay}.

We denote by $\mathcal L_\theta$ a system of differential equations of the form $L(x,u^{(\EqOrd)},\theta(x,u^{(\EqOrd)}))=0$ 
in the~$n$ independent variables $x=(x_1,\dots,x_n)$ and the~$m$ dependent variables~$u=(u^1,\dots,u^m)$, 
were $L$ is a tuple of differential functions of~$u$. 
The short-hand notation~$u^{(\EqOrd)}$ is used for the tuple of derivatives of~$u$ with respect to $x$ up to order~$\EqOrd$ 
and by convention~$u$'s are included in~$u^{(\EqOrd)}$ as the zero-order derivatives. 
The tuple $\theta=(\theta^1(x,u^{(\EqOrd)}),\dots,\theta^k(x,u^{(\EqOrd)}))$ of parametric functions 
runs through the solution set~$\mathcal S$ of an auxiliary system 
$S(x,u^{(\EqOrd)},\theta^{(q)}(x,u^{(\EqOrd)}))=0$ 
of differential equations in~$\theta$, 
where $\theta^{(q)}$ denotes the partial derivatives of the arbitrary elements~$\theta$ up to order $q$ with respect to both~$x$ and~$u^{(\EqOrd)}$. 
It is usually also necessary to explicitly include some inequalities of the form 
$\Sigma(x,u^{(\EqOrd)},\theta^{(q)}(x,u^{(\EqOrd)}))\ne0$ (${}>0$,\ ${}<0$, \dots) in the auxiliary system. 
The \textit{class of (systems of) differential equations} $\mathcal L|_{\mathcal S}$ then consists of the parameterized systems~$\mathcal L_\theta$'s, 
for which~$\theta$ runs through the set~$\mathcal S$.

For the class of differential equations~\eqref{eq:GeneralLinearEvolutionEquation}, we have 
$n=2$, $m=1$ and the specific notation of independent variables $x_1=t$ and $x_2=x$.
The auxiliary system of equations for the tuple of arbitrary elements $\theta=(A^0,\dots,A^\EqOrd,B)$ consists of the equations
\[
 A^k_{u_\alpha}=0,\quad B_{u_\alpha}=0,\quad |\alpha|\leqslant r,
\]
where $\alpha=(\alpha_1,\alpha_2)$ is a multi-index, $\alpha_1,\alpha_2\in\mathbb N\cup\{0\}$, $|\alpha|=\alpha_1+\alpha_2$, and
$u_{\alpha}=\partial^{|\alpha|}u/\partial t^{\alpha_1}\partial x^{\alpha_2}$, 
which means that the arbitrary elements do not depend on derivatives of~$u$. 
Moreover, the auxiliary inequality is $A^\EqOrd\ne0$, which guarantees 
that equations from the class~\eqref{eq:GeneralLinearEvolutionEquation} are indeed of order~$\EqOrd$.

Of central importance in the group classification of differential equations is the study of point transformations relating two equations of the class to each other. The triple $(\theta,\tilde\theta,\varphi)$, where $\theta,\tilde\theta\in\mathcal S$ are arbitrary elements such that the associated systems $\mathcal L_\theta$ and $\mathcal L_{\tilde\theta}$ from the class~$\mathcal L|_{\mathcal S}$ are similar, and $\varphi$ is a point transformation of~$(x,u)$ mapping~$\mathcal L_\theta$ to~$\mathcal L_{\tilde\theta}$, is called an \emph{admissible transformation}. The set of admissible transformations of the class~$\mathcal L|_{\mathcal S}$, denoted by~$\mathcal G^\sim=\mathcal G^\sim(\mathcal L|_{\mathcal S})$, has a natural groupoid structure and is called the \emph{equivalence groupoid} of the class~$\mathcal L|_{\mathcal S}$. The computation of the equivalence groupoid and the study of its properties constitute a key step in the algebraic method of group classification.

Those point transformations in the space of independent and dependent variables and arbitrary elements 
that are projectable to the space of~$(x,u^{(\EqOrd')})$ for each $\EqOrd'=0,\dots,\EqOrd$, 
are compatible with the contact structure on the space of~$(x,u^{(\EqOrd)})$ 
and map every system from the class~$\mathcal L|_{\mathcal S}$ to a system from the same class 
are called (usual) \emph{equivalence transformations}. 
The equivalence transformations of~$\mathcal L|_{\mathcal S}$ constitute a Lie (pseudo)group, 
which is called the \emph{equivalence group}~$G^\sim$ of the class~$\mathcal L|_{\mathcal S}$. 
Each equivalence transformation~$\mathcal T\in G^\sim$ generates a family of admissible transformations of the class~$\mathcal L|_{\mathcal S}$, 
$G^\sim\ni\mathcal T\rightarrow\{(\theta,\mathcal T\theta,\pi_*\mathcal T)\mid \theta\in\mathcal S\}\subset\mathcal G^\sim$. 
Here $\pi$ denotes the projection of the space of $(x,u^{(\EqOrd)},\theta)$ to the space of equation variables only, 
$\pi(x,u^{(\EqOrd)},\theta)=(x,u)$, 
and thus the pushforward $\pi_*\mathcal T$ of~$\mathcal T$ by $\pi$ is just the restriction of~$\mathcal T$ to the space of~$(x,u)$.
In~this way, the equivalence group generates a subgroupoid of the equivalence groupoid. 

The infinitesimal generators of one-parameter groups of equivalence transformations constitute the \emph{equivalence algebra}~$\mathfrak g^\sim$ of the class~$\mathcal L|_{\mathcal S}$. The vector fields from~$\mathfrak g^\sim$ are defined on the space of~$(x,u^{(\EqOrd)},\theta)$ and are projectable to the spaces of~$(x,u^{(\EqOrd')})$ for each $\EqOrd'=0,\dots,\EqOrd$. The compatibility with the contact structure of the space of~$(x,u^{(\EqOrd)})$ means that the projection of any vector field of~$\mathfrak g^\sim$ to the space of~$(x,u^{(\EqOrd)})$ coincides with the $\EqOrd$th order prolongation of the corresponding projection to the space of~$(x,u)$.

The \emph{maximal point symmetry group}~$G_\theta$ of the system~$\mathcal L_\theta$ consists of 
the point transformations of $(x,u)$ that preserve the solution set of~$\mathcal L_\theta$.
Each transformation~$\varphi$ from~$G_\theta$ induces the admissible transformation $(\theta,\theta,\varphi)$ in the class~$\mathcal L|_{\mathcal S}$.
The common part~$G^\cap$ of all~$G_\theta$ is called 
the \emph{kernel} of maximal point symmetry groups of systems from~$\mathcal L|_{\mathcal S}$, 
$G^\cap:=\bigcap_{\theta\in\mathcal S} G_\theta$. 
The infinitesimal generators of one-parameter subgroups of the maximal point symmetry group~$G_\theta$ (resp.\ the kernel~$G^\cap$)
are vector fields in the space of~$(x,u)$, which span 
the \emph{maximal Lie invariance algebra}~$\mathfrak g_\theta$ of~$\mathcal L_\theta$ 
(resp.\ of the \emph{kernel invariance algebra}~$\mathfrak g^\cap$ of the class~$\mathcal L|_{\mathcal S}$).

The solution of the group classification problem for the class~$\mathcal L|_{\mathcal S}$ is to find all $G^\sim$-inequivalent values for~$\theta\in\mathcal S$ such that the associated systems, $\mathcal L_\theta$, admit maximal Lie invariance algebras, $\mathfrak g_\theta$, that are wider than the kernel invariance algebra~$\mathfrak g^\cap$. Further taking into account additional point equivalences between obtained cases, provided such additional equivalences exist, solves the group classification problem up to~$\mathcal G^\sim$-equivalence.

It is customary to work with the maximal Lie invariance algebra~$\mathfrak g_\theta$ rather than with the associated group~$G_\theta$, since the former can be readily computed using the Lie infinitesimal method. In particular, the infinitesimal invariance criterion states that a vector field 
$Q=\sum_{i=1}^n\xi^i(x,u)\p_{x_i}+\sum_{a=1}^m\eta^a(x,u)\p_{u^a}$ 
is an element of the maximal Lie invariance algebra~$\mathfrak g_\theta$ if and only if it is true that
$
 Q^{(\EqOrd)}L(x,u^{(\EqOrd)},\theta^{(q)}(x,u^{(\EqOrd)}))=0
$
on the manifold~$\mathcal L^\EqOrd_\theta$ defined by the system~$\mathcal L_\theta$ and its differential consequences in the $\EqOrd$th order jet space~$J^{(\EqOrd)}$. 
The $\EqOrd$th order prolongation~$Q^{(\EqOrd)}$ of the vector field~$Q$ is given by the general prolongation formula, 
see~\cite{olve86Ay} and Section~\ref{sec:DetEqsLieSymmetriesLinearEvolutionEqs}.

Splitting the equations implied by the infinitesimal invariance criterion with respect to the parametric derivatives of~$u$ leads to the determining equations for the components of~$Q$. For a class of differential equations~$\mathcal L|_{\mathcal S}$, there may be a subsystem of the determining equations that does not involve the tuple of arbitrary elements~$\theta$ and hence can be integrated directly. The remaining part of the determining equations explicitly contains arbitrary elements and is referred to as the \emph{classifying equations}. The purpose of group classification is the exhaustive investigation of the classifying equations. The direct integration (up to the equivalence generated by the corresponding equivalence group) of this part of the determining equations is usually only possible for classes of the simplest structure, e.g.\ classes involving only constants or functions of a single argument as arbitrary elements, see, e.g., examples in~\cite{ovsi82Ay}. Since most classes of interest in applications are of more complicated structure, different methods have to be used, which at least enhance the direct method~\cite{niki01Ay,vane09Ay,vane12Ay}. 

The most advanced classification techniques rest on the study of algebras of vector fields associated with systems from the class under consideration and constitute, in total, the \emph{algebraic method} of group classification. 
Although very effective in practice, the algebraic method requires a certain underlying structure of the class, which is conveniently expressed using various notions of normalization.
The class of differential equations~$\mathcal L|_{\mathcal S}$ is \emph{normalized} (in the usual sense) 
if the subgroupoid induced by the (usual) equivalence group~$G^\sim$ of~$\mathcal L|_{\mathcal S}$ 
coincides with the entire equivalence groupoid~$\mathcal G^\sim$ of~$\mathcal L|_{\mathcal S}$. 
The algebraic method of group classification is usually the method of choice to solve the complete group classification problem for a normalized class. 
If the equivalence groupoid~$\mathcal G^\sim$ is generated jointly by the equivalence group~$G^\sim$ 
and point symmetry groups of systems from~$\mathcal L|_{\mathcal S}$, 
i.e., for any $(\hat\theta,\check\theta,\varphi)\in\mathcal G^\sim$ there exist 
$\hat\varphi\in G_{\hat\theta}$, $\check\varphi\in G_{\check\theta}$ and $\mathcal T\in G^\sim$ such that 
$\check\theta=\mathcal T\hat\theta$ and $\varphi=\check\varphi(\pi_*\mathcal T)\hat\varphi$,
then the class~$\mathcal L|_{\mathcal S}$ is called \emph{semi-normalized}. 
Note that one of the symmetry transformations $\hat\varphi$ or $\check\varphi$ 
can always be assumed to be the identity transformation. 

To establish the normalization properties of the class~$\mathcal L|_{\mathcal S}$ it is necessary to compute its equivalence groupoid~$\mathcal G^\sim$. 
This is done using the direct method. Here one fixes two arbitrary systems from the class, 
$\mathcal L_\theta\colon L(x,u^{(\EqOrd)},\theta(x,u^{(\EqOrd)}))=0$ 
and $\mathcal L_{\tilde\theta}\colon L(\tilde x,\tilde u^{(\EqOrd)},\tilde\theta(\tilde x,\tilde u^{(\EqOrd)}))=0$, 
and aims to find the (nondegenerate) point transformations, $\varphi$: $\tilde x_i=X^i(x,u)$, $\tilde u^a=U^a(x,u)$, $i=1,\dots,n$, $a=1,\dots,m$, connecting them. 
For this, one changes the variables in the system~$\mathcal L_{\tilde\theta}$ by expressing the derivatives $\tilde u^{(\EqOrd)}$ in terms of $u^{(\EqOrd)}$ and derivatives of the functions $X^i$ and $U^a$ and substituting $X^i$ and $U^a$ for $\tilde x_i$ and $\tilde u^a$, respectively. 
The requirement that the resulting transformed system has to be satisfied identically for solutions of~$\mathcal L_\theta$ 
leads to the system of determining equations for the transformation components of~$\varphi$. 
Then, e.g., the class~$\mathcal L|_{\mathcal S}$ is normalized (in the usual sense) if 
the following conditions are satisfied: 
The transformational part~$\varphi$ of each admissible transformation 
does in fact not depend on the fixed initial value~$\theta$ of the arbitrary-element tuple 
and is hence appropriate for any initial value of the arbitrary-element tuple. 
Moreover, the prolongation of~$\varphi$ to the space of~$(x,u^{(\EqOrd)})$ 
and the further extension to the arbitrary elements according to the relation between~$\theta$ and~$\tilde\theta$ 
give a point transformation in the joint space of~$(x,u^{(\EqOrd)},\theta)$.

The recently introduced notion of uniformly semi-normalized classes~\cite{kuru16a} also plays 
an important role for group classification of various classes of differential equations, including the class~\eqref{eq:GeneralLinearEvolutionEquation}.
Let $\pi_*G^\sim$ denote the restriction of~$G^\sim$ to the space of~$(x,u)$,  
$\pi_*G^\sim=\{\pi_*\mathcal T\mid \mathcal T\in G^\sim\}$.

\begin{definition}\label{DefinitionOfUniformlySemi-normalizedClasses}
A class of differential equations~$\mathcal L|_{\mathcal S}$ 
with equivalence groupoid~$\mathcal G^\sim$ and (usual) equivalence group~$ G^\sim$
is called \emph{uniformly semi-normalized} 
with respect to the symmetry-sub\-group family $\mathcal N_{\mathcal S}=\{N_\theta\mid N_\theta\subseteq G_\theta,\theta\in\mathcal S\}$
if the following properties are satisfied:
\begin{enumerate}\itemsep=0ex
\item
Each~$N_\theta$ trivially intersects $\pi_*G^\sim$ only at the identity transformation.
\item
$N_{\mathcal T\theta}=(\pi_*\mathcal T)N_\theta(\pi_*\mathcal T)^{-1}$ 
for any $\theta\in\mathcal S$ and any $\mathcal T\in G^\sim$.
\item
For any $(\hat\theta,\check\theta,\varphi)\in\mathcal G^\sim$ there exist 
$\hat\varphi\in N_{\hat\theta}$, $\check\varphi\in N_{\check\theta}$ and $\mathcal T\in G^\sim$ such that 
$\check\theta=\mathcal T\hat\theta$ and $\varphi=\check\varphi(\pi_*\mathcal T)\hat\varphi$.
\end{enumerate}
\end{definition}

The implementation of the algebraic method to carry out group classification of the class~$\mathcal L|_{\mathcal S}$ 
involves the theorem on splitting symmetry groups in uniformly semi-normalized classes \cite[Theorem~2]{kuru16a}. 
It implies that for each $\theta\in\mathcal S$, $N_\theta$ is a normal subgroup of~$G_\theta$, 
$G^{\rm ess}_\theta:=G_\theta\cap\pi_*G^\sim$ is a subgroup of~$G_\theta$, 
and the group~$G_\theta$ is a semidirect product of~$G^{\rm ess}_\theta$ acting on~$N_\theta$, 
$G_\theta=G^{\rm ess}_\theta\ltimes N_\theta$.  
If the family $\mathcal N_{\mathcal S}$ is a priori known, 
then the study of point symmetries of systems from the class~$\mathcal L|_{\mathcal S}$ 
reduces to classifying the essential parts, $G^{\rm ess}_\theta$, of point symmetry groups, $G_\theta$. 
The infinitesimal counterparts of these results are true for the corresponding Lie algebras of vector fields~\cite{kuru16a}.  

The special kind of uniform semi-normalization, which is originally discussed in~\cite{kuru16a} and is also relevant for this paper, is related to linear superposition of solutions. We specify it for the case of single equations of a single dependent variable~$u=u^1$. The starting point for the consideration is a normalized class~$\mathcal L^{\rm inh}|_{\mathcal S^{\rm inh}}$ of linear (in general, inhomogeneous) differential equations of the form~$L(x,u^{(\EqOrd)},\theta^{(q)}(x))=\zeta(x)$, where the arbitrary elements~$\theta=(\theta^1,\dots,\theta^k)$ and~$\zeta$ depend only on~$x$, and the corresponding auxiliary system for the arbitrary elements does not constrain~$\zeta$. The class has to satisfy the following properties: (i) each system from~$\mathcal L^{\rm inh}|_{\mathcal S^{\rm inh}}$ is locally solvable; (ii) the only common solution of homogeneous systems from~$\mathcal L^{\rm inh}|_{\mathcal S^{\rm inh}}$ is the zero solution; (iii) restricting the elements of the equivalence group~$G^\sim_{\rm inh}=G^\sim(\mathcal L^{\rm inh}|_{\mathcal S^{\rm inh}})$ to the space of equation variables~$(x,u)$ yields fiber-preserving transformations whose components for~$u$ are affine in~$u$, i.e.\ $\tilde x_i=X^i(x)$, $u=M(x)(u+h(x))$, where~$\det(X^i_{x_{i'}})\ne0$ and $M\ne0$. Here~$i$ and~$i'$ take values from~$1$ to~$n$, and~$h(x)$ is an arbitrary smooth function of~$x$. 

Each equation from the class~$\mathcal L^{\rm inh}|_{\mathcal S^{\rm inh}}$ can be mapped to the associated homogeneous equation using equivalence transformations. Moreover, the equivalence group~$G^\sim_{\rm inh}$ is split as~$G^\sim_{\rm inh}=H^\sim_{\rm inh}\ltimes N^\sim_{\rm inh}$, where~$H^\sim_{\rm inh}$ is the subgroup constituted by elements of~$G^\sim_{\rm inh}$ with $h=0$ and~$N^\sim_{\rm inh}$ is the normal subgroup of the transformations with $(x,u)$-components of the form~$\tilde x_j=x_j$ and~$\tilde u=u+h(x)$. The restriction of~$H^\sim_{\rm inh} $ to the space of~$(x,u^{(\EqOrd)},\theta)$ coincides with the equivalence group~$G^\sim_{\rm hmg}$ of the associated class of homogeneous equations, denoted by~$\mathcal L^{\rm hmg}|_{\mathcal S^{\rm hmg}}$. Since equations from~\smash{$\mathcal L^{\rm inh}|_{\mathcal S^{\rm inh}}$} are $G^\sim_{\rm inh}$-equivalent if and only if their homogeneous counterparts are $G^\sim_{\rm hmg}$-equivalent, the group classification of equations from~$\mathcal L^{\rm inh}|_{\mathcal S^{\rm inh}}$ can be obtained by classifying symmetries of equations from~\smash{$\mathcal L^{\rm hmg}|_{\mathcal S^{\rm hmg}}$}. This is the strategy that is employed in the present paper for solving the classification problem for the class~\eqref{eq:GeneralLinearEvolutionEquation}; see also Remark~\ref{rem:GroupClassificationOfLinearHomogeneousGaugeEvolEqs} below. 

The group classification of the class~$\mathcal L^{\rm hmg}|_{\mathcal S^{\rm hmg}}$ is facilitated by taking into account that this class is \emph{uniformly semi-normalized with respect to linear superposition of solutions}, which means the following: For each $\theta\in\mathcal S^{\rm hmg}$, consider the subgroup~$G^{\rm lin}_\theta$ of the point symmetry group~$G_\theta$ of the system~$\mathcal L_\theta$ from the class~$\mathcal L^{\rm hmg}|_{\mathcal S^{\rm hmg}}$ consisting of the linear superposition transformations, $\tilde x_j=x_j$ and $\tilde u=u+h(x)$, where~$h$ is a solution of~$\mathcal L_\theta$. The class~$\mathcal L^{\rm hmg}|_{\mathcal S^{\rm hmg}}$ is uniformly semi-normalized with respect to the family of subgroups~$\mathcal N_{\rm lin}=\{G^{\rm lin}_\theta\mid\theta\in\mathcal S^{\rm hmg}\}$. 

In view of the above theorem on splitting symmetry groups in uniformly semi-normalized classes, 
for each $\theta\in\mathcal S^{\rm hmg}$
the group~$G_\theta$ is decomposed as~$G_\theta=G^{\rm ess}_\theta\ltimes G^{\rm lin}_\theta$, 
where $G^{\rm ess}_\theta=G_\theta\cap\pi_*G^\sim_{\rm hmg}$ 
is the \textit{essential part of the symmetry group}~$G_\theta$. 
The splitting of~$G_\theta$ induces the splitting of the corresponding maximal Lie invariance algebra 
$\mathfrak g_\theta=\mathfrak g^{\rm ess}_\theta \lsemioplus \mathfrak g^{\rm lin}_\theta$. 
Here $\mathfrak g^{\rm ess}_\theta$ is the \emph{essential Lie invariance algebra} of~$\mathcal L_\theta$, 
$\mathfrak g^{\rm ess}_\theta=\mathfrak g_\theta\cap\pi_*\mathfrak g^\sim_{\rm hmg}$, 
and the ideal~$\mathfrak g^{\rm lin}_\theta$ consists of the vector fields 
that generates one-parameter symmetry groups related to linear superposition of solutions. 
Such ideals are a priori known and can be neglected in the course of group classification. 
Therefore, the group classification for the class of homogeneous differential equations~$\mathcal L^{\rm hmg}|_{\mathcal S^{\rm hmg}}$ can be carried out by classifying appropriate subalgebras of the equivalence algebra~$\mathfrak g^\sim_{\rm hmg}$ (resp., of its pushforward $\pi_*\mathfrak g^\sim_{\rm hmg}$ by the projection~$\pi$ to the space of $(x,u)$). 
Here we should adapt the definition of appropriate subalgebras, which was first introduced in~\cite{card11Ay}, 
to the case of classes that are uniformly semi-normalized with respect to linear superposition of solutions.
A subalgebra~$\mathfrak s$ of the algebra~$\mathfrak g^\sim_{\rm hmg}$ 
(resp., of its pushforward $\pi_*\mathfrak g^\sim_{\rm hmg}$)
is called \emph{appropriate} if its pushforward $\pi_*\mathfrak s$ by~$\pi$ 
(resp., $\mathfrak s$ itself)
coincides with~$\mathfrak g^{\rm ess}_\theta$ for some~$\theta\in\mathcal S^{\rm hmg}$.

\section{Equivalence groupoid}\label{sec:EquivalenceGroupoidLinearEvolutionEqs}

We compute the equivalence groupoids and equivalence groups of the class~\eqref{eq:GeneralLinearEvolutionEquation} and its subclasses using the direct method. 
The computation technique is similar to that for classes of single $\EqOrd$th order linear ordinary differential equations~\cite{boyk15a}.
Following~\cite{vane14a}, we first construct a nested chain of normalized superclasses for the class~\eqref{eq:GeneralLinearEvolutionEquation} 
starting from the widest convenient class of general (1+1)-dimensional $\EqOrd$th order evolution equations,
\begin{equation}\label{eq:GeneralEvolutionEquation}
 u_t=H(t,x,u_0,\dots,u_\EqOrd), \quad H_{u_\EqOrd}\ne0,
\end{equation}
and sequentially narrow this class to the class~\eqref{eq:GeneralLinearEvolutionEquation} 
by setting more constraints for the arbitrary element~$H$. 
This leads to the sequential restriction of the equivalence group of the class~\eqref{eq:GeneralEvolutionEquation}, 
which is manifested in setting more constraints for functions parameterizing equivalence transformations.
After reaching the class~\eqref{eq:GeneralLinearEvolutionEquation} 
and reparameterizing it using $\theta=(A^0,\dots,A^\EqOrd,B)$ as arbitrary elements instead of $H$, 
we continue with gauging of arbitrary elements of the class~\eqref{eq:GeneralLinearEvolutionEquation} 
by families of equivalence transformations 
to obtain a class appropriate for the group classification by the algebraic method. 
For each of the above steps, we present both the corresponding additional constraints for arbitrary elements 
and the constrained form of transformations. 

A contact transformation relates two fixed equations from the class~\eqref{eq:GeneralEvolutionEquation} if and only if they are of the form 
$\tilde t=T(t)$, $\tilde x=X(t,x,u,u_x)$, $\tilde u=U(t,x,u,u_x)$, $\tilde u_{\tilde x}=V(t,x,u,u_x)$ and 
$\tilde u_{\tilde t}=(U_t+U_uu_t-(X_t+X_uu_t)V)/T_t$,
where the usual nondegeneracy assumption and contact condition are satisfied~\cite{maga93Ay}. 
Subsequently adding the constraint~$H_{u_\EqOrd u_l}=0$, where $l=2,\dots,\EqOrd$, one then obtains that all contact transformations between equations from the corresponding subclass are induced by point transformations. In other words, the transformational part of admissible transformations is of the form $\tilde t=T(t)$, $\tilde x=X(t,x,u)$ and $\tilde u=U(t,x,u)$. If, furthermore, $H_{u_\EqOrd u_1}=0$, then $X=X(t,x)$ only. 
The next step is to impose $H_{u_\EqOrd u_0}=0$, which gives the constraint $U_{uu}=0$ for admissible transformations. The associated class is again normalized and still contains the class~\eqref{eq:GeneralLinearEvolutionEquation} of linear evolution equations. 
Hence we can start the computation of the equivalence groupoid of the class~\eqref{eq:GeneralLinearEvolutionEquation} with the obtained restricted form of point transformations
\[
\varphi\colon\quad \tilde t=T(t),\quad \tilde x=X(t,x),\quad \tilde u=U^1(t,x)u+U^0(t,x).
\] 
The nondegeneracy condition reduces to $T_tX_xU^1\ne0$. The equivalence groupoid is found upon fixing two arbitrary equations $\mathcal L_\theta$ and $\mathcal L_{\tilde\theta}$ from the class~\eqref{eq:GeneralLinearEvolutionEquation} and supposing their connection through a nondegenerate point transformation~$\varphi$ of the restricted form. In practice, this is done by re-expressing the jet variables~$(\tilde t,\tilde x,\tilde u^{(\EqOrd)})$ in terms of $(t,x,u^{(\EqOrd)})$, followed by substitution in~$\mathcal L_{\tilde\theta}$, which gives the intermediate equation~$\tilde{\mathcal L}$. Note that for the transformation~$\varphi$, the transformed derivative operators are
\[
 \partial_{\tilde t}=\frac{1}{T_t}\left(\partial_t-\frac{X_t}{X_x}\partial_x\right),\quad \partial_{\tilde x}=\frac{1}{X_x}\partial_x.
\]
Since~$\mathcal L_\theta$ and~$\mathcal L_{\tilde\theta}$ are assumed to be connected by a nondegenerate point transformation, 
the intermediate equation~$\tilde{\mathcal L}$ is then required to be identically satisfied by all solutions of~$\mathcal L_\theta$ for each appropriate transformation. 
Substituting the expression for~$u_t$ from~\eqref{eq:GeneralLinearEvolutionEquation} into~$\tilde{\mathcal L}$ and
splitting the resulting equation with respect to parametric derivative~$u_0$, \dots, $u_\EqOrd$, 
we derive only formulas connecting $\theta$ and~$\tilde\theta$ with no constraints for~$T$, $X$, $U^1$ and~$U^0$. 
Using Fa\`{a} di Bruno's formula, the explicit transformation formulae for the arbitrary elements could be obtained. However, they are quite cumbersome and are in fact not needed at this stage. On the other hand, the transformations of $A^\EqOrd$ and~$B$ are readily derived without the need to invoke Fa\`{a} di Bruno's formula,
\[
 \tilde A^\EqOrd=\frac{(X_x)^\EqOrd}{T_t}A^\EqOrd,\quad \tilde B=\frac{U^1}{T_t}\left(B+(\partial_t-A^k\partial_k)\frac{U^0}{U^1}\right),
\]
see also the analogous formulae for the class of (1+1)-dimensional linear second-order evolution equations presented in~\cite{popo08Ay}. 

The same transformation~$\varphi$ can be applied to any equation~$\mathcal L_\theta$ from the class~\eqref{eq:GeneralLinearEvolutionEquation} 
and maps~$\mathcal L_\theta$ to an equation~$\mathcal L_{\tilde\theta}$ from the same class. 
The relation between the arbitrary-element tuples~$\theta$ and~$\tilde\theta$,  
which is derived from the equation~$\tilde{\mathcal L}$, 
defines, when $\theta$ varies, 
the prolongation of the transformation~$\varphi$ to the arbitrary elements~$A^k$ and~$B$. 
The prolonged transformation is a point transformation in the joint space of~$(t,x,u,\theta)$ 
and thus belongs to the equivalence group~$G^\sim_{\mbox{\tiny\eqref{eq:GeneralLinearEvolutionEquation}}}$
of the class~\eqref{eq:GeneralLinearEvolutionEquation}.%
\footnote{%
Both the arbitrary elements $\theta=(A^0,\dots,A^\EqOrd,B)$ of the class~\eqref{eq:GeneralLinearEvolutionEquation} 
and the corresponding components of equivalence transformations 
do not depend on derivatives of~$u$. 
Therefore, the restriction of equivalence transformations to the space of~$(t,x,u,\theta)$ is well defined. 
By definition, the components of equivalence transformations for derivatives of~$u$ are expressed via 
the $t$-, $x$- and $u$-components. 
This is why it is convenient to assume 
that the equivalence group~$G^\sim_{\mbox{\tiny\eqref{eq:GeneralLinearEvolutionEquation}}}$ 
of the class~\eqref{eq:GeneralLinearEvolutionEquation} 
as well as the equivalence groups of its subclasses act in the space of~$(t,x,u,\theta)$.
} 
Since this exhausts all possible transformations among equations from the class~\eqref{eq:GeneralLinearEvolutionEquation}, 
the equivalence groupoid~$\mathcal G^\sim_{\mbox{\tiny\eqref{eq:GeneralLinearEvolutionEquation}}}$ of this class 
is induced by its equivalence group~$G^\sim_{\mbox{\tiny\eqref{eq:GeneralLinearEvolutionEquation}}}$. 
In other words, the class~\eqref{eq:GeneralLinearEvolutionEquation} is normalized.

Using equivalence transformations, we can gauge some of the arbitrary elements of the class~\eqref{eq:GeneralLinearEvolutionEquation}. For example, it would be possible to apply the gauge $B=0$ and thus obtain the general class of $\EqOrd$th order homogeneous linear evolution equations. The problem with this gauge is that the resulting class is not normalized anymore but only uniformly semi-normalized with respect to linear superposition of solutions. This is why the gauge $B=0$, which is essential for the efficient solution of the group classification problem, will be applied at the latest possible stage. Similarly, we could set $A^0=0$ or $A^0=A^1=0$, cf.~\cite{boyk15a,popo08Ay}, but the corresponding subclasses are of complicated structure with respect to point transformations. It is thus advantageous to apply the gauge $A^\EqOrd=1$, which singles out a normalized subclass of linear evolution equations. 
Assuming this gauge, we obtain $(X_x)^\EqOrd=T_t$, which implies $X_{xx}=0$ and therefore $X=X^1(t)x+X^0(t)$, where $(X^1)^\EqOrd=T_t$. 
The nondegeneracy condition then reduces to $T_tU^1\ne0$. 
The condition $X_{xx}=0$ essentially simplifies the computation of the transformed derivatives $\tilde u_k$.
Since $X_x$ does not depend on~$x$, we can now keep off the use of Fa\`{a} di Bruno's formula. 
Thus, the transformation of $A^{\EqOrd-1}$ gives, thanks to the general Leibniz rule, 
\[
 \tilde A^{\EqOrd-1}=\frac{U^1}{(X^1)^{\EqOrd-1}}A^{\EqOrd-1}+\EqOrd\frac{U^1_x}{(X^1)^\EqOrd}.
\]
The subsequent gauge $A^{\EqOrd-1}=0$ gives a normalized subclass,
\begin{equation}\label{eq:LinearInhomogeneousGaugeEvolEqs}
 u_t=u_\EqOrd+A^l(t,x)u_l+B(t,x).
\end{equation}
Here and in the following the summation over the repeated index~$l$ from $0$ to $\EqOrd-2$ is implied. 
Admissible transformations within the class~\eqref{eq:LinearInhomogeneousGaugeEvolEqs} satisfy the additional constraint $U^1_x=0$. 
This last simplification makes it possible to obtain 
the compact transformation formulae for all arbitrary elements~$A^l$ and~$B$, 
which proves the following theorem:

\begin{theorem}\label{thm:EquivalenceGroupInhomogenousEvolEqs}
\begin{subequations}\label{eq:EquivalenceGroupoidGenLinEvolutionEqs}
The class~\eqref{eq:LinearInhomogeneousGaugeEvolEqs} of reduced (1+1)-dimensional linear inhomogeneous evolution equations of order~$\EqOrd$ is normalized. 
Its equivalence group consists of the transformations of the form
\begin{align}
 &\tilde t=T(t),\quad \tilde x=X^1(t)x+X^0(t),\quad \tilde u=U^1(t)u+U^0(t,x),\label{eq:PointTransformationBetweenEvolEqs}\\
 &\tilde A^j=\frac{(X^1)^j}{T_t}A^j,\quad \tilde A^1=\frac{X^1}{T_t}A^1-\frac{X^1_tx+X^0_t}{T_t},\quad \tilde A^0=\frac{1}{T_t}\left(A^0+\frac{U^1_t}{U^1}\right),\label{eq:PointTransformationBetweenEvolEqsAE1}\\
 &\tilde B=\frac{U^1}{T_t}\left(B+(\partial_t-\partial_\EqOrd-A^l\partial_l)\frac{U^0}{U^1}\right),\label{eq:PointTransformationBetweenEvolEqsAE2}
\end{align}
\end{subequations}
where $j=2,\dots,\EqOrd-2$, $T=T(t)$, $X^0=X^0(t)$, $U^1=U^1(t)$ and $U^0=U^0(t,x)$ are arbitrary smooth functions of their arguments with $T_tU^1\ne0$, 
$X^1=\sqrt[\EqOrd]{T_t}$ if $\EqOrd$ is odd and $X^1=\epsilon\sqrt[\EqOrd]{T_t}$ with $\epsilon=\pm 1$ and $T_t>0$ if $\EqOrd$ is even.
\end{theorem}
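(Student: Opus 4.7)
The preceding discussion has already narrowed the transformational part $\varphi$ of every admissible transformation within the class~\eqref{eq:LinearInhomogeneousGaugeEvolEqs} to the ansatz $\tilde t=T(t)$, $\tilde x=X^1(t)x+X^0(t)$, $\tilde u=U^1(t)u+U^0(t,x)$, subject to $(X^1)^{\EqOrd}=T_t$, $U^1_x=0$ and $T_tU^1\ne 0$. The plan is therefore to take this ansatz as given, re-express the jet variables on the tilde side in terms of the untilded ones, substitute into the transformed equation $\tilde u_{\tilde t}=\tilde u_\EqOrd+\tilde A^l\tilde u_l+\tilde B$, eliminate $u_t$ via~\eqref{eq:LinearInhomogeneousGaugeEvolEqs}, and split the resulting identity with respect to the parametric variables $u_0,\dots,u_\EqOrd$ to extract both the residual restrictions on $T,X^0,X^1,U^0,U^1$ and the transformation laws for the arbitrary elements.

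The computational core becomes clean because $X^1$ and $U^1$ depend on $t$ only. Since $\partial_{\tilde x}=(X^1)^{-1}\partial_x$, the Leibniz rule yields the compact identity $\tilde u_k=(X^1)^{-k}\bigl(U^1u_k+\partial_x^kU^0\bigr)$ for $k=0,\dots,\EqOrd$, with no contribution from $x$-derivatives of $X^1$ or $U^1$. Applying $\partial_{\tilde t}=T_t^{-1}\bigl(\partial_t-(X^1_tx+X^0_t)(X^1)^{-1}\partial_x\bigr)$ to $\tilde u=U^1u+U^0$ and inserting these expressions produces a polynomial identity in $u_0,\dots,u_\EqOrd$ whose coefficients depend on $(t,x)$ alone. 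Splitting monomial by monomial then reads off the claimed formulas: the $u_\EqOrd$-coefficient reconfirms $(X^1)^\EqOrd=T_t$, whence the parity case $X^1=\sqrt[\EqOrd]{T_t}$ for odd~$\EqOrd$ and $X^1=\epsilon\sqrt[\EqOrd]{T_t}$ with $\epsilon=\pm 1$, $T_t>0$ for even~$\EqOrd$; the $u_{\EqOrd-1}$-coefficient reproduces $\tilde A^{\EqOrd-1}=0$; the $u_j$-coefficients for $j=2,\dots,\EqOrd-2$ deliver the scaling formula $\tilde A^j=(X^1)^jA^j/T_t$; the $u_1$- and $u_0$-coefficients pick up the additional inhomogeneous contributions coming from $\partial_{\tilde t}(\tilde x)$ and from $\partial_tU^1$, producing the formulas for $\tilde A^1$ and $\tilde A^0$ in~\eqref{eq:PointTransformationBetweenEvolEqsAE1}; and the $u$-free remainder rearranges, after cancellation of the $(X^1_tx+X^0_t)U^0_x/(X^1T_t)$ terms between the $\partial_{\tilde t}$- and $\tilde A^1$-contributions, into the transport-type expression~\eqref{eq:PointTransformationBetweenEvolEqsAE2} for $\tilde B$.

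The conclusion is then immediate: the resulting formulas involve only the parameters $T,X^0,U^1,U^0$ and not the source tuple $\theta=(A^0,\dots,A^{\EqOrd-2},B)$, so the same $\varphi$ prolongs consistently to the arbitrary elements and gives a genuine point transformation in the space of $(t,x,u,\theta)$. Hence every admissible transformation of~\eqref{eq:LinearInhomogeneousGaugeEvolEqs} is induced by an equivalence transformation, which is the asserted normalization, and~\eqref{eq:EquivalenceGroupoidGenLinEvolutionEqs} is the explicit description of~$G^\sim$. The only step demanding slight care is the bookkeeping of the mixed Leibniz contributions from $\partial_x^kU^0$ and from the $x$-dependence of the $A^l$ when separating the $u_1$-, $u_0$- and $u$-free coefficients; there is no substantive obstacle, because the prior gauges $A^\EqOrd=1$ and $A^{\EqOrd-1}=0$ have already forced $X^1$ and $U^1$ to be functions of $t$ only, which is exactly what renders Fa\`{a} di Bruno's formula unnecessary here.
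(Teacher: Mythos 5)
Your proposal is correct and follows essentially the same route as the paper: it takes the restricted transformation form $\tilde t=T(t)$, $\tilde x=X^1(t)x+X^0(t)$, $\tilde u=U^1(t)u+U^0(t,x)$ with $(X^1)^\EqOrd=T_t$ and $U^1_x=0$ obtained from the nested normalization/gauging chain, substitutes and splits with respect to $u_0,\dots,u_\EqOrd$ (with the Leibniz rule replacing Fa\`a di Bruno thanks to $X_{xx}=0$), and reads off \eqref{eq:PointTransformationBetweenEvolEqsAE1}--\eqref{eq:PointTransformationBetweenEvolEqsAE2}, including the correct cancellation of the $(X^1_tx+X^0_t)U^0_x/(X^1T_t)$ terms in the $\tilde B$ formula. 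The concluding normalization argument—no constraints on the parameter functions depend on $\theta$, and the prolonged map is a point transformation in $(t,x,u,\theta)$—matches the paper's reasoning.
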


Due to~$U^0$ being an arbitrary function in~\eqref{eq:EquivalenceGroupoidGenLinEvolutionEqs}, 
we can gauge the inhomogeneity $B$ to zero, 
leading to the important subclass of (1+1)-dimensional homogeneous linear $\EqOrd$th order evolution equations of the form
\begin{equation}\label{eq:LinearHomogeneousGaugeEvolEqs}
 u_t=u_\EqOrd+A^l(t,x)u_l.
\end{equation}
In the following, by $A$ and $\mathcal L_A$ we denote the tuple of arbitrary elements $(A^0,\dots,A^{\EqOrd-2})$ 
and the corresponding equation from the class~\eqref{eq:LinearHomogeneousGaugeEvolEqs}. 
In contrast to the associated class~\eqref{eq:LinearInhomogeneousGaugeEvolEqs} of inhomogeneous equations,
the class~\eqref{eq:LinearHomogeneousGaugeEvolEqs} loses the normalization property.  
Therefore, not the entire equivalence groupoid~$\mathcal G^\sim$ of the class~\eqref{eq:LinearHomogeneousGaugeEvolEqs} 
is induced by the equivalence group~$G^\sim$ of this class and it is necessary to describe both objects. 

\begin{corollary}
 The equivalence groupoid~$\mathcal G^\sim$ of the class~\eqref{eq:LinearHomogeneousGaugeEvolEqs} 
 of reduced (1+1)-dimensional linear homogeneous evolution equations of order~$\EqOrd$ consists of the triplets of the form~$(A,\tilde A,\varphi)$, 
 where the point transformation~$\varphi$ in the space of variables is of the form~\eqref{eq:PointTransformationBetweenEvolEqs} 
 with $U^0/U^1$ satisfying the equation~$\mathcal L_A$, 
 and the tuples of arbitrary elements~$A$ and~$\tilde A$ are related by~\eqref{eq:PointTransformationBetweenEvolEqsAE1}.
\end{corollary}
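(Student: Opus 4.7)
The plan is to deduce the equivalence groupoid of the homogeneous class~\eqref{eq:LinearHomogeneousGaugeEvolEqs} by specializing Theorem~\ref{thm:EquivalenceGroupInhomogenousEvolEqs}, since~\eqref{eq:LinearHomogeneousGaugeEvolEqs} is carved out of~\eqref{eq:LinearInhomogeneousGaugeEvolEqs} by the constraint $B=0$ (and correspondingly $\tilde B=0$ on the target equation). First I would observe that every admissible transformation $(A,\tilde A,\varphi)$ of the homogeneous class gives rise to an admissible transformation of the inhomogeneous superclass, obtained by extending the arbitrary-element tuples with $B=\tilde B=0$. By Theorem~\ref{thm:EquivalenceGroupInhomogenousEvolEqs}, the transformational part $\varphi$ must then have the form~\eqref{eq:PointTransformationBetweenEvolEqs}, and the tuples $A$ and $\tilde A$ must be related by~\eqref{eq:PointTransformationBetweenEvolEqsAE1}.

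The key step is to substitute $B=0$ and $\tilde B=0$ into the transformation formula~\eqref{eq:PointTransformationBetweenEvolEqsAE2}. This immediately reduces to
\[
\bigl(\partial_t-\partial_\EqOrd-A^l\partial_l\bigr)\frac{U^0}{U^1}=0,
\]
which, since $U^1$ depends only on $t$ and the differential operator acts on the $(t,x)$-dependent ratio $U^0/U^1$, is exactly the statement that $U^0/U^1$ is a solution of the source equation $\mathcal L_A$. The independent-variable parts $T$, $X^1$, $X^0$, $U^1$ of the transformation remain unconstrained beyond what Theorem~\ref{thm:EquivalenceGroupInhomogenousEvolEqs} already requires, and the only residual constraint is this linear partial differential equation imposed on $U^0$ through the arbitrary element $A$.

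The converse is routine. Given any $\varphi$ of the form~\eqref{eq:PointTransformationBetweenEvolEqs} with $U^0/U^1$ satisfying $\mathcal L_A$ and with $\tilde A$ defined by~\eqref{eq:PointTransformationBetweenEvolEqsAE1}, Theorem~\ref{thm:EquivalenceGroupInhomogenousEvolEqs} guarantees that $\varphi$ maps the inhomogeneous equation with data $(A,0)$ to the inhomogeneous equation with data $(\tilde A,\tilde B)$; but formula~\eqref{eq:PointTransformationBetweenEvolEqsAE2} together with the hypothesis on $U^0/U^1$ forces $\tilde B=0$. Hence $\varphi$ maps $\mathcal L_A$ to $\mathcal L_{\tilde A}$ within the homogeneous class, so the triplet $(A,\tilde A,\varphi)$ is admissible.

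There is no serious obstacle; the only conceptual point worth stressing is that here $U^0$ is no longer a free function as it was in Theorem~\ref{thm:EquivalenceGroupInhomogenousEvolEqs}, but must solve a linear PDE whose coefficients depend on the source arbitrary element $A$. This $A$-dependence of the admissible transformations is exactly why the subgroupoid induced by the equivalence group of the homogeneous class fails to exhaust the full equivalence groupoid, and it is the source of the loss of normalization remarked on immediately before the corollary.
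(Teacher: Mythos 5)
Your proposal is correct and follows essentially the same route as the paper: the admissible transformations of the homogeneous subclass are obtained by restricting those of the normalized inhomogeneous class~\eqref{eq:LinearInhomogeneousGaugeEvolEqs} to $B=\tilde B=0$, whereupon formula~\eqref{eq:PointTransformationBetweenEvolEqsAE2} reduces to the condition that $U^0/U^1$ solves~$\mathcal L_A$. The paper states only this key substitution; your write-up merely makes the embedding into the superclass and the converse direction explicit.
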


\begin{proof}
If $B=0$ and $\tilde B=0$, then the equation~\eqref{eq:PointTransformationBetweenEvolEqsAE2} implies that the ratio $U^0/U^1$ is a solution of~$\mathcal L_A$.
\end{proof}

\begin{corollary}
 The usual equivalence group~$G^\sim$ of the class~\eqref{eq:LinearHomogeneousGaugeEvolEqs} of (1+1)-dimensional linear homogeneous evolution equations of order~$\EqOrd$ consists of transformations of the form~\eqref{eq:PointTransformationBetweenEvolEqs} and~\eqref{eq:PointTransformationBetweenEvolEqsAE1} with $U^0=0$.
\end{corollary}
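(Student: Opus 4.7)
The plan is to deduce the corollary directly from the preceding characterization of the equivalence groupoid $\mathcal G^\sim$ of the class~\eqref{eq:LinearHomogeneousGaugeEvolEqs}, combined with the requirement that an element of the (usual) equivalence group~$G^\sim$ must be uniform in the arbitrary element tuple~$A$.

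First I would recall the setup: by the preceding corollary, every admissible transformation in $\mathcal G^\sim$ has its $(t,x,u)$-component of the form~\eqref{eq:PointTransformationBetweenEvolEqs}, with the relation~\eqref{eq:PointTransformationBetweenEvolEqsAE1} between $A$ and $\tilde A$, and subject to the constraint that $U^0/U^1$ is a solution of the source equation~$\mathcal L_A$. An equivalence group element, by definition, is a point transformation in the joint space $(t,x,u,\theta)$; in particular, its components $T$, $X^0$, $X^1$, $U^1$, $U^0$ depend only on $(t,x)$ and are not allowed to depend on the arbitrary element tuple~$A$. Hence the function $U^0/U^1$ must be a fixed function of $(t,x)$ that solves $\mathcal L_A$ for \emph{every} admissible $A\in\mathcal S^{\rm hmg}$.

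Next I would invoke the property that the only common solution of all homogeneous equations from~\eqref{eq:LinearHomogeneousGaugeEvolEqs} is the zero solution (this is exactly property~(ii) from the general discussion of uniform semi-normalization with respect to linear superposition of solutions recalled in Section~\ref{sec:AlgebraicGroupClassificationMethod}). A quick verification: fixing any $h(t,x)$ and varying $A^0$ alone in the equation $h_t=h_\EqOrd+A^l h_l$ forces $h\equiv0$. Therefore $U^0/U^1\equiv0$, and since $U^1\ne0$ we conclude $U^0\equiv0$.

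Conversely, any transformation of the form~\eqref{eq:PointTransformationBetweenEvolEqs} with $U^0=0$ together with the prolongation~\eqref{eq:PointTransformationBetweenEvolEqsAE1} maps the class~\eqref{eq:LinearHomogeneousGaugeEvolEqs} into itself independently of~$A$, by Theorem~\ref{thm:EquivalenceGroupInhomogenousEvolEqs} specialized to $B=\tilde B=0$; this gives the reverse inclusion and completes the description of~$G^\sim$. The only conceptual point that needs care is the uniformity step --- interpreting ``equivalence group element'' as a map on the joint space $(t,x,u,\theta)$, so that the transformational part may not depend on~$\theta$ --- but this is built into the definitions recalled in Section~\ref{sec:AlgebraicGroupClassificationMethod} and in the footnote to Theorem~\ref{thm:EquivalenceGroupInhomogenousEvolEqs}, and is really the only nontrivial ingredient beyond the preceding corollary.
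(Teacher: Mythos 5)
Your proposal is correct and follows essentially the same route as the paper: it restricts the characterization of the equivalence groupoid to transformations that are uniform in the arbitrary elements~$A$, so that $U^0/U^1$ must be a common solution of all equations~\eqref{eq:LinearHomogeneousGaugeEvolEqs}, and then concludes $U^0=0$ from the fact that the only common solution is zero (your verification by varying $A^0$ is the same comparison of two equations that the paper uses). Your explicit check of the converse inclusion via Theorem~\ref{thm:EquivalenceGroupInhomogenousEvolEqs} with $B=\tilde B=0$ is a harmless addition that the paper leaves implicit.
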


\begin{proof}
Since equivalence transformations are point transformations in the space of variables $(t,x,u,A)$ that can be applied to all equations from the class, only those transformations of the form~\eqref{eq:PointTransformationBetweenEvolEqs} and~\eqref{eq:PointTransformationBetweenEvolEqsAE1} for which $U^0$ runs through the set of common solutions of this class satisfy this requirement. The unique common solution of equations from class~\eqref{eq:LinearHomogeneousGaugeEvolEqs} is the zero solution $u=0$, which is readily seen, e.g., from subtracting the two equations~$u_t=u_\EqOrd$ and $u_t=u_\EqOrd+u$. Hence $U^0=0$ for equivalence transformations.
\end{proof}

\begin{corollary}
The class~\eqref{eq:LinearHomogeneousGaugeEvolEqs} is uniformly semi-normalized with respect to linear superposition of solutions.  
\end{corollary}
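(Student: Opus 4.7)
The plan is to verify directly the three conditions in Definition~\ref{DefinitionOfUniformlySemi-normalizedClasses} for the family $\mathcal N_{\rm lin}=\{G^{\rm lin}_A\mid A\in\mathcal S\}$, where each $G^{\rm lin}_A$ consists of the linear superposition transformations $\tilde t=t$, $\tilde x=x$, $\tilde u=u+h(t,x)$ with $h$ ranging over solutions of $\mathcal L_A$. That each such map actually belongs to the maximal point symmetry group $G_A$ is immediate from the linearity and homogeneity of $\mathcal L_A$.

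The trivial-intersection condition is read off from the second corollary above: every element of $\pi_*G^\sim$ acts on $u$ as multiplication by $U^1(t)$ with $U^0=0$, so the only transformation that is simultaneously of the form $\tilde u=U^1(t)u$ and of the form $\tilde u=u+h(t,x)$ with $\tilde t=t$ and $\tilde x=x$ is the identity.

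For the conjugation condition, fix $\mathcal T\in G^\sim$ and $\hat\varphi\in G^{\rm lin}_A$ with shift function $h$. A short direct computation using~\eqref{eq:PointTransformationBetweenEvolEqs} shows that $(\pi_*\mathcal T)\hat\varphi(\pi_*\mathcal T)^{-1}$ again has the form $\tilde t=t$, $\tilde x=x$, $\tilde u=u+\tilde h$, where $\tilde h(\tilde t,\tilde x)=U^1(t)h(t,x)$ re-expressed in the variables $(\tilde t,\tilde x)=(T(t),X^1(t)x+X^0(t))$. Since $\mathcal T$ is an equivalence transformation of the class~\eqref{eq:LinearHomogeneousGaugeEvolEqs}, $\tilde h$ solves $\mathcal L_{\mathcal T A}$, and hence $(\pi_*\mathcal T)G^{\rm lin}_A(\pi_*\mathcal T)^{-1}=G^{\rm lin}_{\mathcal T A}$.

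The decomposition condition is the crucial one and falls out of the first corollary above. Given $(A,\tilde A,\varphi)\in\mathcal G^\sim$, that corollary provides $\varphi$ in the form~\eqref{eq:PointTransformationBetweenEvolEqs} with $h:=U^0/U^1$ a solution of $\mathcal L_A$, and $\tilde A$ related to $A$ by~\eqref{eq:PointTransformationBetweenEvolEqsAE1}. I would let $\mathcal T\in G^\sim$ be the equivalence transformation having the same $(T,X^1,X^0,U^1)$ as $\varphi$ but with $U^0=0$, take $\hat\varphi\in G^{\rm lin}_A$ to be the superposition shift by $h$, and set $\check\varphi=\mathrm{id}$; then composing yields exactly $\varphi$, while $\mathcal T A=\tilde A$ since~\eqref{eq:PointTransformationBetweenEvolEqsAE1} does not involve $U^0$. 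The only mildly delicate point is the notational bookkeeping in the second step; there is no real obstacle, because the entire mechanism is already forced by the appearance of $U^0/U^1$ as a solution of $\mathcal L_A$ in the description of $\mathcal G^\sim$.
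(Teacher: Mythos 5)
Your verification is correct, but it follows a different route than the paper. The paper's proof does not check the three axioms of Definition~\ref{DefinitionOfUniformlySemi-normalizedClasses} one by one; instead it verifies the hypotheses of the general mechanism recalled at the end of Section~\ref{sec:AlgebraicGroupClassificationMethod} (taken from the work on linear Schr\"odinger equations): local solvability of each equation, the zero function being the only common solution, and the fact that transformational parts of admissible transformations have the form~\eqref{eq:PointTransformationBetweenEvolEqs} with parameters $T$, $X^0$, $X^1$, $U^1$ independent of the arbitrary elements and with $U^0/U^1$ running through the solution set of the source equation. You instead verify the definition directly from the two preceding corollaries: the trivial intersection of $G^{\rm lin}_A$ with $\pi_*G^\sim$ (since elements of $\pi_*G^\sim$ have $U^0=0$), the conjugation identity $(\pi_*\mathcal T)G^{\rm lin}_A(\pi_*\mathcal T)^{-1}=G^{\rm lin}_{\mathcal T A}$ via the explicit computation $\tilde h=U^1h$ in the new variables, and the factorization $\varphi=(\pi_*\mathcal T)\hat\varphi$ with $\mathcal T$ obtained from $\varphi$ by setting $U^0=0$, $\hat\varphi$ the shift by $U^0/U^1$, and $\check\varphi=\mathrm{id}$, using that~\eqref{eq:PointTransformationBetweenEvolEqsAE1} does not involve $U^0$. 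All three checks are sound, and the decomposition step is exactly the content that the general criterion packages. What your approach buys is self-containedness: no appeal to the external theorem beyond its definition, with the key facts (in particular $U^0=0$ for genuine equivalence transformations, which itself rests on the zero-common-solution argument) entering through the corollaries you cite. What the paper's approach buys is brevity and a seamless link to the splitting theorem for uniformly semi-normalized classes, which is invoked again later; note also that local solvability, which you do not need for the bare definition, is part of the hypotheses of that general framework and is therefore recorded in the paper's proof.
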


\begin{proof}
The assertion is implied by the following facts: 
Each equation from the class~\eqref{eq:LinearHomogeneousGaugeEvolEqs} is locally solvable. 
The unique common solution of equations from class~\eqref{eq:LinearHomogeneousGaugeEvolEqs} is the zero solution $u=0$, cf. the previous proof. 
For a general admissible transformation in the class~\eqref{eq:LinearHomogeneousGaugeEvolEqs}, 
its transformational part is of the form~\eqref{eq:PointTransformationBetweenEvolEqs}.
Hence the transformation component(s) for $(t,x)$ (resp.\ $u$) do not depend on~$u$ (resp.\ is affine in~$u$), 
the transformation parameters~$T$, $X^0$, $X^1$ and $U^1$ do not depend on the arbitrary elements $A^l$'s, 
and the ratio $U^0/U^1$ runs through the solution set of the corresponding initial equation. 
\end{proof}

\begin{corollary}\label{cor:OnEquivalenceAlgebraOfHomogeneousEquations}
The equivalence algebra of the class~\eqref{eq:LinearHomogeneousGaugeEvolEqs} of (1+1)-dimensional linear homogeneous evolution equations of order~$\EqOrd$ is given by
\begin{equation}\label{eq:EquivalenceAlgebraGenLinEvolutionEqs}
 \mathfrak g^\sim=\langle\hat D(\tau),\hat P(\chi),\hat I(\phi)\rangle,
\end{equation}
where $\tau$, $\chi$ and $\phi$ run through the set of smooth functions of~$t$, with
\begin{align*}
 &\hat D(\tau)=\tau\partial_t+\frac{1}{\EqOrd}\tau_tx\partial_x-\sum_{j=2}^{\EqOrd-2} \frac{\EqOrd-j}{\EqOrd}\tau_tA^j\p_{A^j}-\left(\frac{\EqOrd-1}{\EqOrd}\tau_t+\frac{1}{\EqOrd}x\tau_{tt}\right)A^1\p_{A^1}-
 \tau_tA^0\p_{A^0},\\
 &\hat P(\chi)=\chi\partial_x-\chi_t\partial_{A^1},\quad \hat I(\phi)=\phi u\partial_u+\phi_t\partial_{A^0}.
\end{align*}
\end{corollary}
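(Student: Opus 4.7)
The plan is to read off $\mathfrak g^\sim$ directly from the explicit parameterization of the equivalence group $G^\sim$ established in the previous corollary, since the equivalence algebra is, by definition, the Lie algebra of infinitesimal generators of one-parameter subgroups of $G^\sim$. There, $G^\sim$ is exhibited as a Lie (pseudo)group whose elements are determined by three arbitrary smooth functions $T(t)$, $X^0(t)$, $U^1(t)$ of $t$ (subject to $T_tU^1\ne 0$ and $U^0=0$), together with, in the even-$\EqOrd$ case, a discrete sign $\epsilon=\pm1$ selecting the branch of $\sqrt[\EqOrd]{T_t}$.

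I would substitute a one-parameter family through the identity,
\begin{equation*}
T(t)=t+s\tau(t)+O(s^2),\qquad X^0(t)=s\chi(t)+O(s^2),\qquad U^1(t)=1+s\phi(t)+O(s^2),
\end{equation*}
with $\tau$, $\chi$, $\phi$ arbitrary smooth functions of~$t$, into the transformation formulas~\eqref{eq:PointTransformationBetweenEvolEqs}--\eqref{eq:PointTransformationBetweenEvolEqsAE1}, and Taylor-expand to first order in $s$. Near the identity $T_t>0$, the sign $\epsilon=1$ is forced, and a uniform expansion $X^1=1+(s/\EqOrd)\tau_t+O(s^2)$ applies in both parities of~$\EqOrd$; the discrete component is invisible to the Lie algebra. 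Expanding further $1/T_t=1-s\tau_t+O(s^2)$ and $(X^1)^j=1+(js/\EqOrd)\tau_t+O(s^2)$, and collecting the coefficient of~$s$ in each of $\tilde t$, $\tilde x$, $\tilde u$, $\tilde A^j$, $\tilde A^1$, $\tilde A^0$, yields the corresponding component of the infinitesimal generator as an expression linear in $\tau$, $\chi$, $\phi$ and their $t$-derivatives. Sorting these contributions by functional parameter reproduces precisely the three vector-field families $\hat D(\tau)$, $\hat P(\chi)$ and $\hat I(\phi)$ displayed in the statement.

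The spanning claim is automatic: every one-parameter subgroup of $G^\sim$ through the identity corresponds to some triple $(\tau,\chi,\phi)$, and the three arbitrary-function slots are algebraically independent (they show up in disjoint combinations of basic coordinates and derivatives), so no further generators arise. I do not expect a genuine obstacle; the only point requiring care is the appearance of the term proportional to $x\tau_{tt}$ in the coefficient of $\partial_{A^1}$, which originates from the $X^1_tx$ contribution in the $\tilde A^1$-formula and is the unique place a second derivative of~$\tau$ enters the computation.
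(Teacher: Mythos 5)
Your proposal is correct and follows essentially the same route as the paper: the paper likewise obtains $\mathfrak g^\sim$ by differentiating the finite equivalence transformations \eqref{eq:PointTransformationBetweenEvolEqs}--\eqref{eq:PointTransformationBetweenEvolEqsAE1} (restricted to the continuous component, $\epsilon=1$, $U^0=0$) with respect to the group parameter at the identity, varying the parameter functions $T$, $X^0$, $U^1$ to produce $\hat D(\tau)$, $\hat P(\chi)$, $\hat I(\phi)$, respectively. The only cosmetic difference is that you perturb all three parameter functions in one family while the paper treats them one at a time, which changes nothing in substance.
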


\begin{proof}
The proof follows immediately from the form of transformations constituting the equivalence group~$G^\sim$ 
and the fact that the algebra~$\mathfrak g^\sim$ consists 
of the infinitesimal generators of one-parameter subgroups of the equivalence group~$G^\sim$. 
We restrict $G^\sim$ to the continuous component by setting $\epsilon=1$ or $T_t>0$ 
and then successively assume that one of the parameter functions~$T$, $X^0$ and $U^1$ 
depend on a single continuous group parameter~$\varepsilon$ 
while the others take the same values as those for the identity transformation 
($t$, $0$, and $1$ for $T$, $X^0$ and~$U^1$, respectively), 
with the identity transformation corresponding to the parameter value~$\varepsilon=0$. 
The components of the infinitesimal generators of the form~$\hat Q=\tau\p_t+\xi\p_x+\eta\p_u+\psi^l\p_{A^l}$ 
are then obtained upon computing
\[
 \tau=\frac{\mathrm{d} \tilde t}{\mathrm{d} \varepsilon}\Big|_{\varepsilon=0},\quad   
 \xi=\frac{\mathrm{d} \tilde x}{\mathrm{d} \varepsilon}\Big|_{\varepsilon=0},\quad 
 \eta=\frac{\mathrm{d} \tilde u}{\mathrm{d} \varepsilon}\Big|_{\varepsilon=0},\quad 
 \psi^l=\frac{\mathrm{d} \tilde A^l}{\mathrm{d} \varepsilon}\Big|_{\varepsilon=0}, 
\]
which yields the spanning vector fields~$\hat D(\tau)$, $\hat P(\chi)$ and~$\hat I(\phi)$ associated to the parameter functions~$T$, $X^0$ and $U^1$, respectively,
\end{proof}

\begin{remark}\label{rem:GroupClassificationOfLinearHomogeneousGaugeEvolEqs}
The class~\eqref{eq:LinearHomogeneousGaugeEvolEqs} is a subclass of the class~\eqref{eq:GeneralLinearEvolutionEquation}. 
Each equation from the class~\eqref{eq:GeneralLinearEvolutionEquation} 
is $G^\sim_{\mbox{\tiny\eqref{eq:GeneralLinearEvolutionEquation}}}$-equivalent to 
an equation from the class~\eqref{eq:LinearHomogeneousGaugeEvolEqs}, 
and equations from the class~\eqref{eq:GeneralLinearEvolutionEquation} 
are $G^\sim_{\mbox{\tiny\eqref{eq:GeneralLinearEvolutionEquation}}}$-equivalent to each other 
if and only if their counterparts from the class~\eqref{eq:LinearHomogeneousGaugeEvolEqs} are $G^\sim$-equivalent. 
This is why the group classification of the class~\eqref{eq:GeneralLinearEvolutionEquation} 
reduces to that of the class~\eqref{eq:LinearHomogeneousGaugeEvolEqs}.
Moreover, in spite of not being normalized, 
the class~\eqref{eq:LinearHomogeneousGaugeEvolEqs} is more convenient for group classification 
than the class~\eqref{eq:GeneralLinearEvolutionEquation}. 
The class~\eqref{eq:LinearHomogeneousGaugeEvolEqs} has fewer number of arbitrary elements, 
and it is the uniform semi-normalization of the class~\eqref{eq:LinearHomogeneousGaugeEvolEqs} 
but not the normalization of the class~\eqref{eq:GeneralLinearEvolutionEquation} 
that allows to accurately neglect symmetry transformations of linear superposition of solutions
in the course of group classification.
\end{remark}

\section{Determining equations for Lie symmetries}\label{sec:DetEqsLieSymmetriesLinearEvolutionEqs}

The computation of the maximal Lie invariance group of an equation~$\mathcal L_A$ from the class~\eqref{eq:LinearHomogeneousGaugeEvolEqs} for a fixed tuple $A$ is readily realized using the Lie infinitesimal method. In particular, the generators of one-parameter point symmetry groups of~$\mathcal L_A$ are of the form $Q=\tau\p_t+\xi\p_x+\eta\p_u$
with the components~$\tau$, $\xi$ and~$\eta$ depending on $(t,x,u)$ and satisfy the infinitesimal invariance criterion, 
\[
 Q^{(\EqOrd)}(u_t-u_\EqOrd-A^lu_l)=0 \quad\text{for all solutions of~$\mathcal L_A$}.
\]
Here $Q^{(\EqOrd)}$ is the $\EqOrd$th prolongation of the vector field~$Q$, which reads
$
 Q^{(\EqOrd)}=Q+\sum_{0<|\alpha|\leqslant \EqOrd}\eta^\alpha\p_{u_\alpha}.
$
Recall that $\alpha=(\alpha_1,\alpha_2)$ is a multi-index, $|\alpha|=\alpha_1+\alpha_2$, and
$u_{\alpha}=\partial^{|\alpha|}u/\partial t^{\alpha_1}\partial x^{\alpha_2}$. 
The expressions for the components~$\eta^\alpha$ follow from the general prolongation formula~\cite{olve86Ay},
\[
\eta^\alpha=\mathrm D^\alpha\left(\eta-\tau u_t-\xi u_x\right)+\tau u_{\alpha+\delta_1}+\xi u_{\alpha+\delta_2},
\]
where $\mathrm D^\alpha=\mathrm D_t^{\alpha_1}\mathrm D_x^{\alpha_2}$, $\mathrm D_t=\partial_t+u_{\alpha+\delta_1}\partial_{u_\alpha}$ and $\mathrm D_x=\partial_x+u_{\alpha+\delta_2}\partial_{u_\alpha}$ are the total derivative operators with respect to~$t$ and~$x$, respectively, and $\delta_1=(1,0)$ and~$\delta_2=(0,1)$. The infinitesimal invariance criterion gives
\begin{equation}\label{eq:InfinitesimalInvarianceLinearEvolEqs}
\eta^{(1,0)}-\eta^{(0,\EqOrd)}-A^l\eta^{(0,l)}-(\tau A^l_t+\xi A^l_x)u_l=0 
\quad\mbox{if}\quad
u_t=u_\EqOrd-A^lu_l.
\end{equation}
Since we have shown above that the class~\eqref{eq:LinearHomogeneousGaugeEvolEqs} 
is uniformly semi-normalized with respect to linear superposition of solutions,
we can use all the restrictions on~$\tau$, $\xi$ and~$\eta$ 
derived in the course of the computation of the equivalence algebra 
also for the computation of the determining equations of Lie symmetries. 
In particular, we have
\[
 \tau=\tau(t),\quad \xi=\frac1\EqOrd\tau_t(t)x+\chi(t),\quad \eta=\phi(t)u+\eta^0(t,x),
\]
where $\eta^0(t,x)$ is a solution of Eq.~\eqref{eq:LinearHomogeneousGaugeEvolEqs}. 
With these restrictions on the components of infinitesimal generators of one-parameter Lie symmetry groups, 
the infinitesimal invariance condition~\eqref{eq:InfinitesimalInvarianceLinearEvolEqs} simplifies~to
\[
 \phi_tu+\eta^0_t-\xi_tu_x-\eta^0_\EqOrd-A^l\eta^0_l-\left(\tau A^l_t+\xi A^l_x+\frac{\EqOrd-l}{\EqOrd}\tau_tA^l\right)u_l=0.
\]
Splitting this equation with respect to the derivatives of~$u$ yields
\begin{subequations}\label{eq:DeterminingEqsLinearEvolEqs}
\begin{align}
 &\tau A^j_t+\left(\frac1\EqOrd\tau_tx+\chi\right)A^j_x+\frac{\EqOrd-j}{\EqOrd}\tau_tA^j=0,\quad j=2,\dots,\EqOrd-2,\label{eq:DeterminingEqsLinearEvolEqs1}\\
 &\tau A^1_t+\left(\frac1\EqOrd\tau_tx+\chi\right)A^1_x+\frac{\EqOrd-1}{\EqOrd}\tau_tA^1+\frac1\EqOrd\tau_{tt}x+\chi_t=0,\label{eq:DeterminingEqsLinearEvolEqs2}\\
 &\tau A^0_t+\left(\frac1\EqOrd\tau_tx+\chi\right)A^0_x+\tau_tA^0-\phi_t=0,\label{eq:DeterminingEqsLinearEvolEqs3}\\
 &\eta^0_t=\eta^0_\EqOrd+A^l\eta^0_l\label{eq:DeterminingEqsLinearEvolEqs4}.
\end{align}
\end{subequations}

The first three equations~\eqref{eq:DeterminingEqsLinearEvolEqs1}--\eqref{eq:DeterminingEqsLinearEvolEqs3} essentially depend on the parameter functions~$A^l\,$'s and are the \textit{classifying equations} for Lie symmetries of equations from the class~\eqref{eq:LinearHomogeneousGaugeEvolEqs}. The last equation~\eqref{eq:DeterminingEqsLinearEvolEqs4} is just a consequence of the linearity of the equation~$\mathcal L_A$ and hence is not a true classifying equation despite the fact that it depends on the arbitrary elements~$A$. We have thus proved the following assertion.

\begin{proposition}
The maximal Lie invariance algebra~$\mathfrak g_A$ of the equation~$\mathcal L_A$ from the class~\eqref{eq:LinearHomogeneousGaugeEvolEqs} consists of vector fields of the form~$Q=D(\tau)+P(\chi)+I(\chi)+Z(\eta^0)$ with
\begin{align*}
D(\tau)=\tau\p_t+\frac{1}{\EqOrd}\tau_tx\p_x,\quad
P(\chi)=\chi\p_x,\quad I(\phi)=\phi u\p_u,\quad Z(\eta^0)=\eta^0\p_u,
\end{align*}
where the parameter functions~$\tau$, $\chi$, $\phi$ and $\eta^0$ satisfy the classifying equations~\eqref{eq:DeterminingEqsLinearEvolEqs1}--\eqref{eq:DeterminingEqsLinearEvolEqs3} and $\eta^0$ runs through the solution set of~$\mathcal L_A$.
\end{proposition}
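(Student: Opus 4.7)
The plan is to combine the preceding infinitesimal computation with the uniform semi-normalization of the class~\eqref{eq:LinearHomogeneousGaugeEvolEqs} established earlier. By that uniform semi-normalization, the maximal Lie invariance algebra decomposes as $\mathfrak g_A=\mathfrak g_A^{\mathrm{ess}}\lsemioplus\mathfrak g_A^{\mathrm{lin}}$, where $\mathfrak g_A^{\mathrm{ess}}$ embeds into the pushforward $\pi_*\mathfrak g^\sim$ described in Corollary~\ref{cor:OnEquivalenceAlgebraOfHomogeneousEquations} and the ideal $\mathfrak g_A^{\mathrm{lin}}$ consists of linear-superposition generators $Z(\eta^0)=\eta^0(t,x)\partial_u$ with $\eta^0$ ranging over solutions of~$\mathcal L_A$. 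Reading off the spanning vector fields from the three families $\hat D(\tau)$, $\hat P(\chi)$, $\hat I(\phi)$ and projecting to the space of $(t,x,u)$, every $Q\in\mathfrak g_A$ can therefore be written uniquely as $Q=D(\tau)+P(\chi)+I(\phi)+Z(\eta^0)$ with $\tau,\chi,\phi$ smooth functions of~$t$ alone and $\eta^0=\eta^0(t,x)$, the last automatically satisfying~\eqref{eq:DeterminingEqsLinearEvolEqs4}.

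Having fixed the form of~$Q$, the remaining task is to derive the classifying conditions on $(\tau,\chi,\phi)$ and on $\eta^0$. This is done by substituting the ansatz into the infinitesimal invariance criterion~\eqref{eq:InfinitesimalInvarianceLinearEvolEqs}, as the text above has already set up. Computing $\eta^{(1,0)}$, $\eta^{(0,l)}$ and $\eta^{(0,\EqOrd)}$ from the general prolongation formula and confining the result to the solution manifold $\mathcal L_A^\EqOrd$ (by eliminating $u_t$ via $u_t=u_\EqOrd+A^lu_l$ and its differential consequences), one obtains the identity displayed immediately before~\eqref{eq:DeterminingEqsLinearEvolEqs}. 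Splitting this identity with respect to the parametric derivatives $u_0,u_1,\dots,u_{\EqOrd-2}$ then produces equations~\eqref{eq:DeterminingEqsLinearEvolEqs1}--\eqref{eq:DeterminingEqsLinearEvolEqs3} as the coefficients of $u_j$ for $j=2,\dots,\EqOrd-2$, of $u_1$, and of $u_0$, respectively, while the $u$-independent remainder reduces to~\eqref{eq:DeterminingEqsLinearEvolEqs4}. Conversely, any quadruple $(\tau,\chi,\phi,\eta^0)$ satisfying these four conditions yields a vector field that fulfils the invariance criterion, by reversing the computation, so the description is both necessary and sufficient.

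Essentially no step is substantive beyond careful bookkeeping. The one place that requires attention is the contribution of the $x$-dependent piece $\frac1\EqOrd\tau_tx$ of $\xi$ to the prolongation: repeated action of $\mathrm D_x$ on $\eta-\tau u_t-\xi u_x$ inside the formula for $\eta^{(0,l)}$ is precisely what generates the universal factor $\frac{\EqOrd-l}{\EqOrd}\tau_t$ multiplying $A^l$ in~\eqref{eq:DeterminingEqsLinearEvolEqs1}--\eqref{eq:DeterminingEqsLinearEvolEqs3}, and the single occurrence of $\mathrm D_t\xi$ on~$u_x$ produces the additional $\frac1\EqOrd\tau_{tt}x+\chi_t$ term in~\eqref{eq:DeterminingEqsLinearEvolEqs2}. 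Once these coefficients are tracked, the claim of the proposition follows at once.
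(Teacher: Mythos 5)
Your proposal is correct and follows essentially the same route as the paper: use the uniform semi-normalization established in Section~\ref{sec:EquivalenceGroupoidLinearEvolutionEqs} to restrict the generator components to $\tau=\tau(t)$, $\xi=\frac1\EqOrd\tau_tx+\chi(t)$, $\eta=\phi(t)u+\eta^0(t,x)$, then substitute into the invariance criterion~\eqref{eq:InfinitesimalInvarianceLinearEvolEqs} and split with respect to $u_0,\dots,u_{\EqOrd-2}$ to obtain~\eqref{eq:DeterminingEqsLinearEvolEqs1}--\eqref{eq:DeterminingEqsLinearEvolEqs3} together with~\eqref{eq:DeterminingEqsLinearEvolEqs4}. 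Your justification of the restricted ansatz via the splitting theorem for uniformly semi-normalized classes is just a repackaging of the paper's appeal to the known form~\eqref{eq:PointTransformationBetweenEvolEqs} of admissible transformations, so there is no substantive difference.
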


\begin{proposition}
The kernel invariance algebra~$\mathfrak g^\cap:=\bigcap_A\mathfrak g_A$ 
of equations from the class~\eqref{eq:LinearHomogeneousGaugeEvolEqs} 
is spanned by $I(1)$, $\mathfrak g^\cap=\langle I(1)\rangle$. 
\end{proposition}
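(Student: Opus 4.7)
The plan is to impose that the components $(\tau,\chi,\phi,\eta^0)$ provided by the previous proposition satisfy the classifying conditions \eqref{eq:DeterminingEqsLinearEvolEqs1}--\eqref{eq:DeterminingEqsLinearEvolEqs4} \emph{for every} $A$. Since the arbitrary element tuple $A$ ranges freely over smooth functions, at any point $(t_0,x_0)$ one can prescribe the values of $A^l$, $A^l_t$ and $A^l_x$ independently. Hence each of the classifying equations \eqref{eq:DeterminingEqsLinearEvolEqs1}--\eqref{eq:DeterminingEqsLinearEvolEqs3} becomes, as an identity polynomial in these formally independent variables, a collection of equations on $\tau$, $\chi$ and $\phi$ obtained by splitting coefficients.

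Concretely, I would first split equation~\eqref{eq:DeterminingEqsLinearEvolEqs2}, which is present for every $\EqOrd>2$: the coefficient of $A^1_t$ gives $\tau=0$, the coefficient of $A^1_x$ then gives $\chi=0$, and the residual term $\frac1\EqOrd\tau_{tt}x+\chi_t$ vanishes automatically. Substituting $\tau=\chi=0$ into \eqref{eq:DeterminingEqsLinearEvolEqs3} leaves $\phi_t=0$, so $\phi$ is a constant. Equation~\eqref{eq:DeterminingEqsLinearEvolEqs1} (empty when $\EqOrd=3$) holds trivially under the same substitution. It remains to handle $\eta^0$: by \eqref{eq:DeterminingEqsLinearEvolEqs4} it must solve $\mathcal L_A$ for every $A$, and as already used in the proof that the class~\eqref{eq:LinearHomogeneousGaugeEvolEqs} is uniformly semi-normalized with respect to linear superposition, the only common solution of all equations of the form~\eqref{eq:LinearHomogeneousGaugeEvolEqs} is $\eta^0=0$ (e.g.\ subtract $u_t=u_\EqOrd$ from $u_t=u_\EqOrd+u$).

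Putting these together, $Q=D(\tau)+P(\chi)+I(\phi)+Z(\eta^0)$ lies in every $\mathfrak g_A$ precisely when $\tau=\chi=\eta^0=0$ and $\phi$ is constant, i.e.\ $Q\in\langle I(1)\rangle$. The reverse inclusion is immediate, since $I(1)=u\p_u$ generates the scaling symmetry $u\mapsto e^\varepsilon u$ of every homogeneous linear equation in the class. There is no genuine obstacle here; the only point requiring mild care is that for the borderline case $\EqOrd=3$ the equations \eqref{eq:DeterminingEqsLinearEvolEqs1} are vacuous, so the conclusions $\tau=\chi=0$ must be extracted solely from \eqref{eq:DeterminingEqsLinearEvolEqs2}, which the splitting above indeed accomplishes.
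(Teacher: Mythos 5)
Your proof is correct and follows essentially the same route as the paper: since $A$ is arbitrary, one splits the classifying/determining equations with respect to the values and derivatives of the arbitrary elements, obtaining $\tau=\chi=0$, $\phi=\const$, and $\eta^0=0$ (the paper gets $\eta^0=0$ by splitting \eqref{eq:DeterminingEqsLinearEvolEqs4} as well, while you invoke the already established fact that the zero function is the only common solution — an equivalent observation). Your extra care about the vacuous case \eqref{eq:DeterminingEqsLinearEvolEqs1} for $\EqOrd=3$ and the explicit converse inclusion are fine but not a different method.
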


\begin{proof}
In the derivation of the kernel of maximal Lie invariance algebras, 
the arbitrary elements~$A$ are assumed to be generic 
and hence one can split the determining equations~\eqref{eq:DeterminingEqsLinearEvolEqs} 
with respect to the arbitrary elements and their derivatives. 
This implies that $\tau=\xi=0$, $\phi=\const$ and $\eta^0=0$. 
\end{proof}

Let us now analyze the algebraic structure of the linear span
\[
\mathfrak g_\spanindex:=\langle D(\tau),P(\chi),I(\phi),Z(\zeta)\rangle,
\]
where the parameter functions~$\tau$, $\chi$ and~$\phi$ run through the set of smooth functions of~$t$, 
and the parameter function~$\zeta$ runs through the set of smooth functions of~$(t,x)$. 
Note that $\mathfrak g_\spanindex=\sum_A \mathfrak g_A$ 
since any vector field among~$D(\tau)$, $P(\chi)$, $P(1)+I(\phi)$ and $Z(\zeta)$ belongs to~$\mathfrak g_A$ for some~$A$.

The nonzero commutation relations between the vector fields spanning~$\mathfrak g_\spanindex$ are exhausted by
\begin{gather*}
[D(\tau),D(\check\tau)]=D(\tau\check\tau_t-\check\tau\tau_t),\quad 
[D(\tau),P(\chi)]=P\left(\tau\chi_t-\frac1\EqOrd\tau_t\chi\right),\quad
[D(\tau),I(\phi)]=I(\tau\phi_t),\\ 
[D(\tau),Z(\zeta)]=Z\left(\tau\zeta_t+\frac1\EqOrd\tau_tx\zeta_x\right),\quad
[P(\chi),Z(\zeta)]=Z(\chi\zeta_x),\quad 
[I(\phi),Z(\zeta)]=-Z(\phi\zeta).
\end{gather*}
These relations show that the span~$\mathfrak g_\spanindex$ is a Lie algebra with respect to the Lie bracket of vector fields. 
Moreover, it can be represented as a semi-direct sum,
\[
\mathfrak g_\spanindex=\mathfrak g^{\rm ess}_\spanindex\lsemioplus \mathfrak g^{\rm lin}_\spanindex, 
\quad \textup{where}\quad 
\mathfrak g^{\rm ess}_\spanindex=\langle D(\tau),P(\chi),I(\phi)\rangle 
\quad\textup{and}\quad 
\mathfrak g^{\rm lin}_\spanindex=\langle Z(\chi)\rangle
\]
are a subalgebra and an Abelian ideal of~$\mathfrak g_\spanindex$, respectively. 
The representation for the algebra~$\mathfrak g_\spanindex$ 
naturally translates into an analogous representation for each maximal Lie invariance algebra~$\mathfrak g_A$, 
\[
\mathfrak g_A=\mathfrak g^{\rm ess}_A\lsemioplus \mathfrak g^{\rm lin}_A
\quad\textup{with}\quad 
\mathfrak g^{\rm ess}_A=\mathfrak g_A\cap\mathfrak g^{\rm ess}_\spanindex
\quad\textup{and}\quad 
\mathfrak g^{\rm lin}_A=\mathfrak g_A\cap\mathfrak g^{\rm lin}_\spanindex=\langle Z(\eta^0),\eta^0\in\mathcal L_A\rangle,
\]
where, as before, $\eta^0$ is an arbitrary smooth solution of the equation~$\mathcal L_A$. 
In this decomposition, $\mathfrak g^{\rm ess}_A$ is a finite-dimensional subalgebra of~$\mathfrak g_A$ 
(see Lemma~\ref{lem:DimOfEssLieInvAlgebra} below), 
and the infinite-dimensional Abelian ideal~$\mathfrak g^{\rm lin}_A$ 
is spanned by the vector fields associated with linear superposition of solutions. 
Since the ideal~$\mathfrak g^{\rm lin}_A$ is a trivial a priori known part of~$\mathfrak g_A$, it is sufficient to focus on finding~$\mathfrak g^{\rm ess}_A$. This is why~$\mathfrak g^{\rm ess}_A$ is called the \emph{essential Lie invariance algebra} of the equation~$\mathcal L_A$.

Let~$\pi$ denote the projection of the joint space of equation variables and arbitrary elements of the class~\eqref{eq:LinearHomogeneousGaugeEvolEqs} to the space of equation variables only, 
$\pi(t,x,u,A)=(t,x,u)$. 
We clearly have $\mathfrak g^{\rm ess}_\spanindex=\pi_*\mathfrak g^\sim$, since the vector fields $\hat D(\tau)$, $\hat P(\chi)$ and~$\hat I(\phi)$ spanning~$\mathfrak g^\sim$ are mapped by~$\pi_*$ to the associated vector fields~$D(\tau)$, $P(\chi)$ and~$I(\phi)$ spanning~$\mathfrak g^{\rm ess}_\spanindex$. 
This is a manifestation of uniform semi-normalization with respect to linear superposition of solutions for the class~\eqref{eq:LinearHomogeneousGaugeEvolEqs}.
Furthermore, since the algebra~$\mathfrak g^{\rm ess}_\spanindex$ coincides with the set~$\pi_*\mathfrak g^\sim$ of the infinitesimal generators of one-parameter subgroups of the group~$\pi_*G^\sim$, the action of~$\pi_*G^\sim$ on~$\mathfrak g^{\rm ess}_\spanindex$ is well defined. The invariance of~$\mathfrak g^{\rm ess}_\spanindex$ and~$\mathfrak g^{\rm lin}_\spanindex$ under the action of~$\pi_*G^\sim$ implies that the action of~$G^\sim$ on equations from the class~\eqref{eq:LinearHomogeneousGaugeEvolEqs} also induces a well-defined action of~$\pi_*G^\sim$ on the essential Lie invariance algebras of these equations, which are subalgebras of~$\mathfrak g^{\rm ess}_\spanindex$. 
The kernel~$\mathfrak g^\cap$ is an ideal (more precisely, the center) of~$\mathfrak g^{\rm ess}_\spanindex$ and, hence, 
the ideal of~$\mathfrak g^{\rm ess}_A$ for each tuple of arbitrary elements~$A$.
The subalgebra~$\mathfrak s$ of~$\mathfrak g^{\rm ess}_\spanindex$ is called \emph{appropriate} if there exists a tuple~$A$ such that $\mathfrak s=\mathfrak g^{\rm ess}_A$.

This is why we have the following proposition; 
cf.\ the end of Section~\ref{sec:AlgebraicGroupClassificationMethod}.

\begin{proposition}
 The complete group classification of the class~\eqref{eq:LinearHomogeneousGaugeEvolEqs} of (1+1)-dimensional linear homogeneous evolution equations of order~$\EqOrd$ is accomplished by classifying all appropriate subalgebras of the algebra~$\mathfrak g^{\rm ess}_\spanindex$ with respect to the equivalence relation generated by the action of~$\pi_*G^\sim$.
 \looseness=-1
\end{proposition}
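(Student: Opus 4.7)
The plan is to combine the uniform semi-normalization of~\eqref{eq:LinearHomogeneousGaugeEvolEqs} with respect to linear superposition of solutions, established above, with the direct-sum decomposition $\mathfrak g_A=\mathfrak g^{\rm ess}_A\lsemioplus\mathfrak g^{\rm lin}_A$ of each maximal Lie invariance algebra. Since the ideal $\mathfrak g^{\rm lin}_A=\langle Z(\eta^0)\mid \eta^0\in\mathcal L_A\rangle$ is determined a priori by the mere linearity of $\mathcal L_A$, the classification of $\mathfrak g_A$ up to equivalence is equivalent to the classification of the essential part $\mathfrak g^{\rm ess}_A\subseteq\mathfrak g^{\rm ess}_\spanindex$, and it therefore suffices to enumerate the possible subalgebras $\mathfrak g^{\rm ess}_A$ of $\mathfrak g^{\rm ess}_\spanindex$ as $A$ ranges over~$\mathcal S$.

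First I would use the identity $\mathfrak g^{\rm ess}_\spanindex=\pi_*\mathfrak g^\sim$, noted in Section~\ref{sec:DetEqsLieSymmetriesLinearEvolutionEqs}, to transfer the action of the equivalence group $G^\sim$ on arbitrary elements to a well-defined action of $\pi_*G^\sim$ on subalgebras of $\mathfrak g^{\rm ess}_\spanindex$. Then I would verify $\pi_*G^\sim$-equivariance of the assignment $A\mapsto\mathfrak g^{\rm ess}_A$: if $\tilde A=\mathcal T A$ for some $\mathcal T\in G^\sim$, then $\pi_*\mathcal T$ sends $\mathfrak g_A$ onto $\mathfrak g_{\tilde A}$, and by the invariance of each of the summands $\mathfrak g^{\rm ess}_\spanindex$ and $\mathfrak g^{\rm lin}_\spanindex$ under $\pi_*G^\sim$ — which follows from $\mathfrak g^{\rm ess}_\spanindex=\pi_*\mathfrak g^\sim$ and from $\mathfrak g^{\rm lin}_\spanindex$ being the ideal of linear-superposition vector fields — the components are preserved separately, giving $\pi_*\mathcal T(\mathfrak g^{\rm ess}_A)=\mathfrak g^{\rm ess}_{\tilde A}$. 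Thus $G^\sim$-equivalent tuples yield $\pi_*G^\sim$-conjugate appropriate subalgebras.

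Finally, since by definition every appropriate subalgebra of $\mathfrak g^{\rm ess}_\spanindex$ arises as $\mathfrak g^{\rm ess}_A$ for some $A\in\mathcal S$, enumerating representatives of all $\pi_*G^\sim$-orbits of appropriate subalgebras that strictly contain the kernel $\mathfrak g^\cap=\langle I(1)\rangle$ and, for each representative $\mathfrak s$, integrating the classifying conditions~\eqref{eq:DeterminingEqsLinearEvolEqs1}--\eqref{eq:DeterminingEqsLinearEvolEqs3} for those tuples $A$ with $\mathfrak g^{\rm ess}_A=\mathfrak s$, will yield a complete list of $G^\sim$-inequivalent symmetry-extension cases. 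The converse equivariance needed to close the argument — namely, that two tuples $A$ and $\tilde A$ with $\pi_*G^\sim$-conjugate essential algebras are necessarily $G^\sim$-equivalent — is precisely the infinitesimal form of the splitting theorem for uniformly semi-normalized classes (Theorem~2 of~\cite{kuru16a}) applied to the family $\mathcal N_{\rm lin}$. The main obstacle lies in this converse step: in general, coincidence of abstract symmetry data does not force equivalence of equations, and the point of invoking uniform semi-normalization is exactly that it guarantees a conjugating element in $\pi_*G^\sim$ lifts to an equivalence transformation on $\mathcal S$. Once this is granted, the proposition follows immediately and the remaining work — carried out in the next section — is the concrete combinatorial classification of appropriate subalgebras of $\mathfrak g^{\rm ess}_\spanindex$.
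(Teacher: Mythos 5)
Your first two paragraphs reproduce the paper's actual argument: the splitting $\mathfrak g_A=\mathfrak g^{\rm ess}_A\lsemioplus\mathfrak g^{\rm lin}_A$ with the a priori known ideal $\mathfrak g^{\rm lin}_A$, the identity $\mathfrak g^{\rm ess}_\spanindex=\pi_*\mathfrak g^\sim$, and the equivariance $\pi_*\mathcal T(\mathfrak g^{\rm ess}_A)=\mathfrak g^{\rm ess}_{\mathcal TA}$ obtained from the invariance of $\mathfrak g^{\rm ess}_\spanindex$ and $\mathfrak g^{\rm lin}_\spanindex$ under $\pi_*G^\sim$; this is exactly how the paper justifies the proposition.

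The final paragraph, however, contains a genuine error. The ``converse equivariance'' you invoke --- that two tuples $A$ and $\tilde A$ whose essential algebras are $\pi_*G^\sim$-conjugate must be $G^\sim$-equivalent --- is false, and it is not what Theorem~2 of~\cite{kuru16a} asserts. That theorem only gives the semidirect splitting $G_\theta=G^{\rm ess}_\theta\ltimes N_\theta$ (and its infinitesimal counterpart), together with the semi-normalization property that every admissible transformation factors as a linear-superposition symmetry composed with the pushforward of an equivalence transformation; it says nothing about equations with conjugate symmetry algebras being equivalent. A counterexample is immediate from Table~\ref{tab:CompleteGroupClassificationLinearEvolEqs}: all equations of Case~1, $A^l=A^l(x)$, share (up to conjugacy) the essential algebra $\langle I(1),D(1)\rangle$, yet they form an infinite-parameter family, while the stabilizing equivalence transformations ($T_{tt}=0$, $X^0_t=0$, $(U^1_t/U^1)_t=0$) depend on finitely many constants, so most of these equations are pairwise $G^\sim$-inequivalent; the same applies to Cases~0 and~3. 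Fortunately, the statement you try to prove is not needed: the algebraic method does not claim a bijection between $\pi_*G^\sim$-classes of appropriate subalgebras and $G^\sim$-classes of equations. What closes the argument is instead (i)~the semi-normalization of the class~\eqref{eq:LinearHomogeneousGaugeEvolEqs}, which guarantees that classification up to $G^\sim$-equivalence already exhausts all point equivalences within the class modulo linear superposition, and (ii)~the fact that, for each representative appropriate subalgebra~$\mathfrak s$, one recovers all realizing tuples~$A$ by integrating the classifying conditions~\eqref{eq:DeterminingEqsLinearEvolEqs1}--\eqref{eq:DeterminingEqsLinearEvolEqs3} and then gauges the resulting family only by the stabilizer of~$\mathfrak s$ in~$G^\sim$, accepting that a single classification case may still contain continuous families of inequivalent equations.
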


\section{Group classification}\label{sec:GroupClassificationLinearEvolutionEqs}

For the classification of appropriate subalgebras of the algebra~$\mathfrak g^{\rm ess}_\spanindex$, 
it is necessary to know the adjoint action of the transformations, $\mathcal T$, from~$\pi_*G^\sim$ 
on the vector fields, $Q$, from~$\mathfrak g^{\rm ess}_\spanindex$. 
Since we already know the finite form of transformations from~$\pi_*G^\sim$, 
it is convenient to compute the action of~$\mathcal T$ on~$Q$ directly by definition 
as the pushforward~$\mathcal T_*Q$ of~$Q$ by~$\mathcal T$~\cite{bihl11Dy,card11Ay},  
\[
 \mathcal T_* Q=Q(T)\p_{\tilde t}+Q(X)\p_{\tilde x}+Q(U)\p_{\tilde u},
\]
where the components of~$\mathcal T_* Q$ are expressed in terms of the transformed variables 
by substituting~$(t,x,u)=\mathcal T^{-1}(\tilde t,\tilde x,\tilde u)$ 
with the inverse transformation~$\mathcal T^{-1}$ of~$\mathcal T$. 
This method is especially suitable for infinite-dimensional Lie algebras. 

Consider the elementary transformations~$\mathcal D(T)$, $\mathcal P(X^0)$ and~$\mathcal I(U^1)$ from~$\pi_*G^\sim$, 
which are respectively obtained from~\eqref{eq:PointTransformationBetweenEvolEqs} with $\epsilon=1$ and $U^0=0$, 
where all but one of the parameter functions~$T$, $X^0$ and $U^1$ are set to trivial values, 
i.e., $t$ for $T$, zero for $X^0$ and one for $U^1$. 
If $\EqOrd$ is even, the group~$\pi_*G^\sim$ also contains the discrete transformation~$\mathcal X$ alternating the sign of~$x$, 
$\mathcal X$: $(\tilde t,\tilde x,\tilde u)=(t,-x,u)$. 
The nontrivial pushforwards of the generating vector fields of~$\mathfrak g^{\rm ess}_\spanindex$ 
by elementary transformations from~$\pi_*G^\sim$ are exhausted by
\begin{align}\label{eq:AdjointActionsEvolEqs}
\begin{split}
&\mathcal D_*(T)D(\tau)=\tilde D(T_t\tau),\quad 
 \mathcal D_*(T)P(\chi)=\tilde P(\sqrt[\EqOrd]{T_t}\chi),\quad 
 \mathcal D_*(T)I(\phi)=\tilde I(\phi),\\
&\mathcal P_*(X^0)D(\tau) =\tilde D(\tau)+\tilde P\left(\tau X^0_t-\frac1\EqOrd\tau_tX^0\right),\quad 
 \mathcal I_*(U^1)D(\tau)=\tilde D(\tau)+\tilde I\left(\tau \frac{U^1_t}{U^1}\right),\\ 
&\EqOrd\in2\mathbb N\colon\quad\mathcal X_*P(\chi)=\tilde P(-\chi).
\end{split}
\end{align}
The tildes over the operators on the right-hand side indicates 
that these vector fields are expressed in terms of the transformed variables, 
and we should substitute for~$t$, $t=T^{-1}(\tilde t)$, where $T^{-1}$ is the inverse function of~$T$.

We now derive the upper bound for the dimension of essential Lie invariance algebras 
for equations from the class~\eqref{eq:LinearHomogeneousGaugeEvolEqs}.

\begin{lemma}\label{lem:DimOfEssLieInvAlgebra}
$\dim\mathfrak g^{\rm ess}_A\leqslant 4$
for any tuple of arbitrary elements~$A$.
\end{lemma}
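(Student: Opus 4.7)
The plan is to show that the space of parameter triples $(\tau(t),\chi(t),\phi(t))$ satisfying the classifying conditions~\eqref{eq:DeterminingEqsLinearEvolEqs1}--\eqref{eq:DeterminingEqsLinearEvolEqs3} has dimension at most four, by reducing these conditions to a first-order linear ODE system in the four unknowns $\tau,\tau_t,\chi,\phi$ regarded as functions of~$t$ only. The crucial structural observation is that while the classifying conditions involve~$x$ through the arbitrary elements~$A^l$, the parameter functions themselves depend solely on~$t$, so any pointwise specialization in~$x$ (or $x$-differentiation followed by such a specialization) produces a genuine relation among the scalar-valued functions $\tau,\tau_t,\tau_{tt},\chi,\chi_t,\phi,\phi_t$ with coefficients smooth in~$t$.

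The equations to exploit are \eqref{eq:DeterminingEqsLinearEvolEqs2} and \eqref{eq:DeterminingEqsLinearEvolEqs3}, which are precisely the ones containing $\tau_{tt}$, $\chi_t$ and $\phi_t$. I would evaluate \eqref{eq:DeterminingEqsLinearEvolEqs3} at $x=0$ to express $\phi_t$ as a linear combination of $\tau,\tau_t,\chi$ with coefficients smooth in~$t$. Evaluating \eqref{eq:DeterminingEqsLinearEvolEqs2} at $x=0$ (where the coefficient $\tfrac{1}{\EqOrd}\tau_{tt}x$ drops out) yields $\chi_t$ as a similar combination of $\tau,\tau_t,\chi$. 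Finally, differentiating \eqref{eq:DeterminingEqsLinearEvolEqs2} with respect to~$x$ and then setting $x=0$ isolates the term $\tfrac{1}{\EqOrd}\tau_{tt}$ and expresses $\tau_{tt}$ as a linear combination of $\tau,\tau_t,\chi$. Combined, the three relations form a first-order linear ODE system
\[
\frac{\mathrm d}{\mathrm dt}(\tau,\tau_t,\chi,\phi)^{\mathsf T}=M(t)(\tau,\tau_t,\chi,\phi)^{\mathsf T}
\]
with a smooth coefficient matrix~$M(t)$. The standard existence-uniqueness theorem for linear systems of ODEs then forces its solution space to have dimension at most four, giving $\dim\mathfrak g^{\rm ess}_A\leqslant 4$.

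The remaining classifying conditions \eqref{eq:DeterminingEqsLinearEvolEqs1} for $j=2,\dots,\EqOrd-2$ impose further constraints that, for specific tuples~$A$, may lower the dimension, but since only an upper bound is sought, these can be ignored. The only potential obstacle is technical and minor: one must choose an evaluation point at which~$A^l$ and its first two derivatives are finite, which is automatic under the standing smoothness hypothesis, and one must verify that the three extracted relations are algebraically independent as expressions for $\tau_{tt},\chi_t,\phi_t$. The latter is immediate because each of these three derivatives appears in precisely one of the three relations.
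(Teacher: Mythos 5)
Your proposal is correct and follows essentially the same route as the paper: both reduce the classifying conditions \eqref{eq:DeterminingEqsLinearEvolEqs2}--\eqref{eq:DeterminingEqsLinearEvolEqs3} to a first-order linear ODE system in $t$ for $(\tau,\tau_t,\chi,\phi)$, whose solution space is at most four-dimensional. The only cosmetic difference is how $\tau_{tt}$ and $\chi_t$ are separated: the paper evaluates \eqref{eq:DeterminingEqsLinearEvolEqs2} at two distinct points $x_0\ne x_1$ of the domain, whereas you evaluate at a single point (which must be chosen in $\Omega_x$ rather than necessarily $x=0$) and additionally differentiate with respect to~$x$ first; either specialization works.
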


\begin{proof}
The proof is analogous to the one of Lemma~18 in~\cite{kuru16a}. 
Let the equation~$\mathcal L_A$ be defined on the domain $\Omega_t\times \Omega_x$, 
where $\Omega_t\subseteq\mathbb R$ and $\Omega_x\subseteq\mathbb R$ 
are open intervals on the $t$- and $x$-axes, respectively. 
We evaluate the classifying equations~\eqref{eq:DeterminingEqsLinearEvolEqs2} and~\eqref{eq:DeterminingEqsLinearEvolEqs3} 
at two distinct points $x_0$ and $x_1$ from $\Omega_x$ and vary $t$. 
This yields
 \begin{align*}
  &\frac1\EqOrd\tau_{tt}x_1+\chi_t=-R^1,\quad \frac1\EqOrd\tau_{tt}x_0+\chi_t=-R^2,\quad \phi_t=R^3,
 \end{align*} 
where $R^1$ and $R^2$ are obtained upon substituting~$x_1$ and~$x_0$ 
into the part of~\eqref{eq:DeterminingEqsLinearEvolEqs2} involving~$A^1$, respectively, 
and $R_3$ is obtained by substituting either $x_0$ or $x_1$ 
into the part of~\eqref{eq:DeterminingEqsLinearEvolEqs3} involving~$A^0$. 
Since the points $x_0$ and $x_1$ are distinct, the above system can be brought into a system of linear ordinary differential equations in the canonical form, 
\[
 \tau_{tt}=\dots,\quad \chi_t=\dots,\quad \phi_t=\dots,
\]
where the precise form of the respective right hand sides of the resolved system is inessential for the current investigation. 
The solution space of this linear system for~$\tau$, $\chi$ and $\phi$ is obviously four-dimensional.
Since more such equations as the above ones may be derived from the classifying equations, 
we have $\dim\mathfrak g^{\rm ess}_A\leqslant 4$.
\end{proof}

Having established the maximum dimension of appropriate subalgebras of~$\mathfrak g^{\rm ess}_\spanindex$, we now proceed to restrict their form. 
This is done in the following series of lemmas, which are the analogous results to those given in~\cite{kuru16a} for the class of linear Schr\"odinger equations. 
It is convenient to introduce the three integer numbers~$k_0$, $k_1$ and~$k_2$ below, which characterize the dimensions of relevant subalgebras of~$\mathfrak g^{\rm ess}_\spanindex$.  
\begin{lemma}\label{lem:EssentialAlgebraContainingKernel}
$\mathfrak g^{\rm ess}_A\cap \langle I(\phi)\rangle=\mathfrak g^\cap$ 
and thus $k_0:=\dim\mathfrak g^{\rm ess}_A\cap\langle I(\phi)\rangle=1$
for any tuple of arbitrary elements~$A$. 
\end{lemma}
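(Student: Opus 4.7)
The plan is to observe that the statement reduces to a direct specialization of the classifying conditions \eqref{eq:DeterminingEqsLinearEvolEqs1}--\eqref{eq:DeterminingEqsLinearEvolEqs3}. Any element of $\mathfrak g^{\rm ess}_A\cap\langle I(\phi)\rangle$ is by definition of the form $Q=I(\phi)=\phi(t)u\partial_u$, which corresponds to the parameter choice $\tau=0$ and $\chi=0$ in the general form of elements of $\mathfrak g^{\rm ess}_A$.

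With $\tau=0$ and $\chi=0$, the classifying conditions \eqref{eq:DeterminingEqsLinearEvolEqs1} and \eqref{eq:DeterminingEqsLinearEvolEqs2} are identically satisfied, independently of the arbitrary elements $A^j$ and $A^1$. Condition \eqref{eq:DeterminingEqsLinearEvolEqs3} simplifies to $\phi_t=0$, forcing $\phi$ to be a real (or complex) constant. Hence $I(\phi)=\phi\, I(1)$, so the intersection is contained in $\langle I(1)\rangle=\mathfrak g^\cap$. The reverse inclusion is immediate since $\mathfrak g^\cap\subseteq\mathfrak g^{\rm ess}_A$ for every~$A$ (the kernel lies in every maximal invariance algebra, and $I(1)\in\langle I(\phi)\rangle$).

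I anticipate no genuine obstacle: the lemma is essentially a bookkeeping observation that among the classifying conditions, only \eqref{eq:DeterminingEqsLinearEvolEqs3} can constrain $\phi$, and it does so in a way that is independent of $A^0$ once $\tau$ and $\chi$ vanish. The only subtlety is making clear that the intersection is taken inside $\mathfrak g^{\rm ess}_\spanindex$ (so that $Q$ is indeed of the pure form $I(\phi)$ with no $D(\tau)$, $P(\chi)$ or $Z(\eta^0)$ contribution), after which the argument is a one-line substitution. The second assertion $k_0=1$ is then immediate from $\mathfrak g^\cap=\langle I(1)\rangle$, already established in the preceding proposition.
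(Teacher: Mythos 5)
Your proof is correct and follows essentially the same route as the paper: with $\tau=\chi=0$ the classifying condition~\eqref{eq:DeterminingEqsLinearEvolEqs3} forces $\phi_t=0$, giving the inclusion in $\mathfrak g^\cap=\langle I(1)\rangle$, and the reverse inclusion is immediate since the kernel lies in every $\mathfrak g^{\rm ess}_A$. The only cosmetic difference is that you note explicitly that \eqref{eq:DeterminingEqsLinearEvolEqs1}--\eqref{eq:DeterminingEqsLinearEvolEqs2} become trivial, which is a harmless refinement of the paper's citation of \eqref{eq:DeterminingEqsLinearEvolEqs2}--\eqref{eq:DeterminingEqsLinearEvolEqs3}.
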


\begin{proof}
The intersection $\mathfrak g^{\rm ess}_A\cap\langle I(\phi)\rangle$ is included in~$\mathfrak g^\cap$ since the classifying equations~\eqref{eq:DeterminingEqsLinearEvolEqs2} and~\eqref{eq:DeterminingEqsLinearEvolEqs3} for $\tau=\chi=0$ imply~$\phi_t=0$. 
On the other hand, the kernel invariance algebra~$\mathfrak g^\cap$ is contained in~$\mathfrak g^{\rm ess}_A$ for any~$A$ 
and thus we have $\mathfrak g^\cap\subset\mathfrak g^{\rm ess}_A\cap\langle I(\phi)\rangle$. 
These two inclusions jointly prove the lemma.
\end{proof}

Lemma~\ref{lem:EssentialAlgebraContainingKernel} implies that 
$\mathfrak g^{\rm ess}_\spanindex\supsetneq\bigcup_A\mathfrak g^{\rm ess}_A$. 
Moreover, $\mathfrak g^{\rm ess}_\spanindex\setminus\bigcup_A\mathfrak g^{\rm ess}_A=\{I(\phi)\mid\phi\ne\const\}$ 
since the classifying equations~\eqref{eq:DeterminingEqsLinearEvolEqs} imply 
that each vector field from the complement of $\{I(\phi)\mid\phi\ne\const\}$ with respect to $\mathfrak g^{\rm ess}_\spanindex$ 
belongs to $\bigcup_A\mathfrak g^{\rm ess}_A$.

\begin{lemma}
$k_1:=\dim\mathfrak g^{\rm ess}_A\cap \langle P(\chi), I(\phi)\rangle-1 \in\{0,1\}$
for any tuple of arbitrary elements~$A$.
\end{lemma}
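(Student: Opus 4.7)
The plan is to sandwich the dimension of $W := \mathfrak g^{\rm ess}_A \cap \langle P(\chi), I(\phi)\rangle$ between $1$ and $2$, so that $k_1 = \dim W - 1 \in \{0,1\}$.

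For the lower bound, by Lemma~\ref{lem:EssentialAlgebraContainingKernel} we have $\mathfrak g^\cap = \langle I(1)\rangle \subseteq \mathfrak g^{\rm ess}_A$ for every tuple~$A$, and plainly $I(1) \in \langle P(\chi), I(\phi)\rangle$. Hence $I(1)\in W$ and $\dim W\geqslant 1$, which forces $k_1\geqslant 0$.

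For the upper bound, a generic element of $W$ has the form $P(\chi)+I(\phi)$ with $\chi=\chi(t)$ and $\phi=\phi(t)$ smooth. I would specialize the classifying conditions \eqref{eq:DeterminingEqsLinearEvolEqs1}--\eqref{eq:DeterminingEqsLinearEvolEqs3} to $\tau=0$, obtaining
\[
\chi A^j_x=0\quad (j=2,\dots,\EqOrd-2),\qquad \chi A^1_x+\chi_t=0,\qquad \phi_t=\chi A^0_x,
\]
required to hold identically in~$(t,x)$. Although these equations a priori couple $t$ and $x$, evaluating the middle equation at any fixed $x_0$ in the spatial domain reduces it to a homogeneous linear first-order ODE in~$t$,
\[
\chi_t(t)+A^1_x(t,x_0)\,\chi(t)=0,
\]
whose solution space is one-dimensional. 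For any such admissible~$\chi$, evaluating the third equation at $x=x_0$ yields $\phi_t(t)=\chi(t)\,A^0_x(t,x_0)$, which determines $\phi$ up to a single additive constant. Consequently the pair $(\chi,\phi)$ lies in a vector space of dimension at most $1+1=2$, so $\dim W\leqslant 2$ and $k_1\leqslant 1$.

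Combining the two bounds yields $k_1\in\{0,1\}$. I do not expect a genuine obstacle here: the crux is simply to observe that freezing a single value $x=x_0$ turns the classifying equations into ODEs for~$\chi$ and~$\phi$. The residual $x$-dependent conditions $\chi A^j_x=0$ for $j\geqslant 2$ together with the requirement that $\chi A^1_x$ and $\chi A^0_x$ be independent of~$x$ can only further constrain~$\chi$ (possibly forcing $\chi\equiv 0$), which is precisely what produces the admissible value $k_1=0$.
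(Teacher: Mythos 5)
Your proposal is correct and follows essentially the same route as the paper: the lower bound comes from $\mathfrak g^\cap=\langle I(1)\rangle\subset\mathfrak g^{\rm ess}_A\cap\langle P(\chi),I(\phi)\rangle$, and the upper bound comes from setting $\tau=0$ in the classifying conditions \eqref{eq:DeterminingEqsLinearEvolEqs2}--\eqref{eq:DeterminingEqsLinearEvolEqs3} and evaluating at a fixed $x_0$ (as in the proof of Lemma~\ref{lem:DimOfEssLieInvAlgebra}) to obtain a linear ODE system $\chi_t=\dots$, $\phi_t=\dots$ with two-dimensional solution space.
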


\begin{proof}
Denote $\mathfrak a_A:=\mathfrak g^{\rm ess}_A\cap \langle P(\chi), I(\phi)\rangle$. 
Since $\mathfrak a_A\supset\mathfrak g^\cap$, then $\dim\mathfrak a_A\geqslant1$. 
Similar to the proof of Lemma~\ref{lem:DimOfEssLieInvAlgebra}, the classifying equations~\eqref{eq:DeterminingEqsLinearEvolEqs2} and~\eqref{eq:DeterminingEqsLinearEvolEqs3} 
imply, at least, a system of two linear ordinary differential equations for the parameter-functions $\chi$ and~$\phi$ depending on~$t$, $\chi_t=\dots,\quad \phi_t=\dots,$ 
whose solution space is two-dimensional. Therefore, $\dim\mathfrak a_A\leqslant2$.
\end{proof}

The projection~$\varpi$ on the space of~$t$ defines the mapping~$\varpi_*$ on $\mathfrak g^{\rm ess}_\spanindex$, $D(\tau)+P(\chi)+I(\phi)\mapsto\tau\p_t$, and thus 
$\varpi_*\mathfrak g^{\rm ess}_\spanindex=\langle\tau\p_t\rangle$, 
where $\tau$ runs through the set of smooth functions of~$t$. 
The pushforward~$\varpi_* G^\sim$ of~$G^\sim$ by the projection~$\varpi$ is also well defined. 

\begin{lemma}\label{lem:OnOperatorsInvolvingTau}
The projection~$\varpi_*\mathfrak g^{\rm ess}_A$ is a Lie algebra for any tuple of arbitrary elements~$A$ and $k_2:=\dim\varpi_*\mathfrak g^{\rm ess}_A\leqslant 2$. Moreover, $\varpi_*\mathfrak g^{\rm ess}_A\in\{0,\langle\p_t\rangle,\langle\p_t,t\p_t\rangle\}\bmod\varpi_* G^\sim$.
\end{lemma}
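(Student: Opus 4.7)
The plan is to establish the three parts of the lemma in sequence: that $\varpi_*\mathfrak{g}^{\rm ess}_A$ is a Lie algebra, that $k_2\leq 2$, and that $\varpi_*\mathfrak{g}^{\rm ess}_A$ falls into one of the listed canonical forms modulo $\varpi_*G^\sim$.

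First I would verify the Lie-algebra property. Every vector field in $\mathfrak{g}^{\rm ess}_\spanindex$ has the projectable form $D(\tau)+P(\chi)+I(\phi)$ with $\tau$, $\chi$, $\phi$ depending only on~$t$. The summands $P(\chi)$ and $I(\phi)$ have vanishing $\p_t$-component and are annihilated by $\varpi_*$, while $D(\tau)\mapsto\tau\p_t$. The commutation relations recorded just before Lemma~\ref{lem:DimOfEssLieInvAlgebra} immediately yield $\varpi_*[Q_1,Q_2]=[\varpi_*Q_1,\varpi_*Q_2]$, so $\varpi_*$ restricted to $\mathfrak{g}^{\rm ess}_A$ is a Lie-algebra homomorphism into the Lie algebra of smooth vector fields on $\Omega_t$, and its image is a Lie subalgebra.

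Next I would bound $k_2$ by combining dimension counts with Lie's classification of finite-dimensional Lie algebras of vector fields on an open interval of~$\mathbb R$. The proof of Lemma~\ref{lem:DimOfEssLieInvAlgebra} reduces the classifying conditions~\eqref{eq:DeterminingEqsLinearEvolEqs1}--\eqref{eq:DeterminingEqsLinearEvolEqs3} to a first-order linear ODE system in $(\tau,\tau_t,\chi,\phi)$, giving $\dim\mathfrak{g}^{\rm ess}_A\leq 4$. Since the kernel of $\varpi_*$ restricted to $\mathfrak{g}^{\rm ess}_A$ equals $\mathfrak{g}^{\rm ess}_A\cap\langle P(\chi),I(\phi)\rangle$, which has dimension $1+k_1\in\{1,2\}$ by the previous two lemmas, this yields the intermediate bound $k_2\leq 3$. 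To exclude $k_2=3$, I would observe that any three-dimensional subalgebra of the Lie algebra of vector fields on an interval is, after a change of the $t$-coordinate---which belongs to $\varpi_*G^\sim$ by Theorem~\ref{thm:EquivalenceGroupInhomogenousEvolEqs}---equal to $\langle\p_t,t\p_t,t^2\p_t\rangle$. Lifting each of $\tau\in\{1,t,t^2\}$ to an element of $\mathfrak{g}^{\rm ess}_A$ and substituting into~\eqref{eq:DeterminingEqsLinearEvolEqs2} produces three compatibility conditions on the pairs $(A^1,\chi_i(t))$; the $\tau=t^2$ condition in particular introduces the inhomogeneous term $\tfrac{2x}{\EqOrd}$, and combining this with the $\tau=1$ and $\tau=t$ conditions forces $A^1$ into a degenerate form under which the derivation of the linear system in the proof of Lemma~\ref{lem:DimOfEssLieInvAlgebra} collapses, contradicting $\dim\mathfrak{g}^{\rm ess}_A=4$. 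I expect this step to be the main obstacle, since it requires careful bookkeeping of the overdetermined system for $A^1$ and the auxiliary functions $\chi_1,\chi_2,\chi_3$.

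Finally, for the classification modulo $\varpi_*G^\sim$, the group $\varpi_*G^\sim$ contains all diffeomorphisms $t\mapsto T(t)$ allowed by Theorem~\ref{thm:EquivalenceGroupInhomogenousEvolEqs}, hence acts transitively on nonvanishing vector fields on $\Omega_t$; any one-dimensional subalgebra is therefore equivalent to $\langle\p_t\rangle$. Any two-dimensional Lie subalgebra of vector fields on an interval is solvable and, by Lie's classification, equivalent via a change of the $t$-coordinate to $\langle\p_t,t\p_t\rangle$. Together with the trivial case $\varpi_*\mathfrak{g}^{\rm ess}_A=0$, this exhausts the three listed possibilities.
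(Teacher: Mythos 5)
Your overall strategy coincides with the paper's: show $\varpi_*$ is a Lie algebra homomorphism, bound the dimension by $\dim\mathfrak g^{\rm ess}_A-\dim\ker(\varpi_*|_{\mathfrak g^{\rm ess}_A})\leqslant 3$, invoke Lie's classification of finite-dimensional algebras of vector fields on the line (noting that $\varpi_*G^\sim$ exhausts local diffeomorphisms of~$t$), and then rule out the realization $\langle\p_t,t\p_t,t^2\p_t\rangle$. The first three steps are fine. The genuine gap is in the exclusion of the three-dimensional case, which is the heart of the lemma and which you yourself flag as "the main obstacle": what you offer there is not an argument. You say that substituting $\tau=1,t,t^2$ into~\eqref{eq:DeterminingEqsLinearEvolEqs2} "forces $A^1$ into a degenerate form under which the derivation of the linear system in the proof of Lemma~\ref{lem:DimOfEssLieInvAlgebra} collapses, contradicting $\dim\mathfrak g^{\rm ess}_A=4$". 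This misidentifies the mechanism: the bound of Lemma~\ref{lem:DimOfEssLieInvAlgebra} is derived for an arbitrary tuple~$A$ and does not "collapse" for any special~$A^1$; the contradiction one needs is that \emph{no} tuple~$A$ admits an essential algebra projecting onto $\langle\p_t,t\p_t,t^2\p_t\rangle$, i.e., that the three lifted vector fields cannot all satisfy the classifying conditions, not that a dimension count fails.

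To close the gap you must actually control the residual parameters $\chi^i,\phi^i$ in the lifts $Q_i=D(t^{i-1})+P(\chi^i)+I(\phi^i)$. The paper does this by exploiting that $\mathfrak g^{\rm ess}_A$ is closed under the Lie bracket: adjoint actions $\mathcal P_*$, $\mathcal I_*$ and recombinations reduce to $\chi^1=\phi^1=0$, then the relations $[Q_1,Q_2]$, $[Q_1,Q_3]$, $[Q_2,Q_3]$ force (after further adjoint actions) $\chi^2=\phi^2=0$ and $\chi^3=0$; finally, since \eqref{eq:DeterminingEqsLinearEvolEqs2} is only of second order in~$\tau$, the fields $D(1)$, $D(t)$, $D(t^2)+I(\phi^3)$ cannot simultaneously satisfy it (explicitly, $\tau=1$ and $\tau=t$ give $A^1_t=0$ and $xA^1_x+(\EqOrd-1)A^1=0$, whereupon $\tau=t^2$ leaves the unremovable term $2x/\EqOrd$). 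Alternatively, your "direct bookkeeping" route can be made to work without normalizing: forming the combination (equation for $\tau=t^2$) $-2t\times$(equation for $\tau=t$) $+t^2\times$(equation for $\tau=1$) eliminates $A^1_t$ and $A^1$ and yields $(\chi^3-2t\chi^2+t^2\chi^1)A^1_x+\tfrac{2}{\EqOrd}x+(\text{functions of }t)=0$, from which differentiations with respect to~$x$ lead to an explicit contradiction. Either way, this computation must be carried out; as written, your proposal leaves the decisive step unproved and its sketched justification is incorrect.
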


\begin{proof}
We first show that~$\varpi_*\mathfrak g^{\rm ess}_A$ is indeed a Lie algebra. 
Let there be given~$\tau^i\p_t\in\varpi_*\mathfrak g^{\rm ess}_A$, $i=1,2$. 
There exist~$Q_i\in\mathfrak g^{\rm ess}_A$ such that~$\varpi_*Q_i=\tau^i\p_t$. 
For any constants~$c_1$ and~$c_2$ it follows that~$c_1Q_1+c_2Q_2\in\mathfrak g^{\rm ess}_A$. 
Thus, $c_1\tau^1\p_t+c_2\tau^2\p_t=\varpi_*(c_1Q_1+c_2Q_2)\in\varpi_*\mathfrak g^{\rm ess}_A$, 
proving that~$\varpi_*\mathfrak g^{\rm ess}_A$ is a linear space. 
Since 
\[[\tau^1\p_t,\tau^2\p_t]=(\tau^1\tau^2_t-\tau^2\tau^1_t)\p_t=\varpi_*[Q_1,Q_2]\in\varpi_*\mathfrak g^{\rm ess}_A,\] 
this space is closed under the Lie bracket of vector fields and thus it is a Lie algebra with 
$\dim\varpi_*\mathfrak g^{\rm ess}_A\leqslant\dim\mathfrak g^{\rm ess}_A-\dim\mathfrak g^{\cap}\leqslant 3$. 

Moreover, the pushforward~$\varpi_* G^\sim$ of~$G^\sim$ by the projection~$\varpi$ coincides with the (pseudo)group of local diffeomorphisms in the space of~$t$.
This is why we can use the Lie theorem stating that the maximum dimension of finite-dimensional Lie algebras of vector fields on the complex (resp.\ real) line is three, 
and, up to local diffeomorphisms of the line, 
these algebras are exhausted by~$\{0\}$, $\langle\p_t\rangle$, $\langle\p_t,t\p_t\rangle$ and~$\langle\p_t,t\p_t,t^2\p_t\rangle$. 
 
We prove by contradiction that the last algebra cannot serve as $\varpi_*\mathfrak g^{\rm ess}_A$ for some~$A$. 
Suppose that this is not the case for a tuple~$A$. 
Then the algebra~$\mathfrak g^{\rm ess}_A$ coincides with the span $\langle Q_0,Q_1,Q_2,Q_3\rangle$, 
where the vector field $Q_0=I(1)$ spans the kernel algebra~$\mathfrak g^\cap$, and
\[
 Q_i=D(t^{i-1})+P(\chi^i)+I(\phi^i),\quad i=1,2,3,
\]
with smooth functions~$\chi^i$ and~$\phi^i$ of~$t$. 
Using the adjoint actions $\mathcal P_*(-\int\!\chi^1\mathrm{d}t)$ and $\mathcal I_*(e^{-\int\!\phi^1\mathrm{d}t})$,
cf.\ \eqref{eq:AdjointActionsEvolEqs}, we can set~$\chi^1=\phi^1=0$. 
In what follows $a$'s, $b$'s and~$c$'s denote constants. 
Commuting~$Q_1$ and~$Q_2$ gives 
\[
[Q_1,Q_2]=D(1)+P(\chi^2_t)+I(\phi^2_t)=a_0Q_0+a_1Q_1+a_2Q_2+a_3Q_3,
\] 
which implies that~$a_1=1$, $a_2=a_3=0$, and hence~$\chi^2_t=0$ and $\phi^2_t=a_0$, i.e.\ $\phi^2=a_0t+a_4$. 
The adjoint actions $\mathcal P_*(\EqOrd\chi^2)$ and $\mathcal I_*(e^{-a_0t})$ 
and the recombination $Q_1+a_0Q_0\to Q_1$, $Q_2-a_4Q_0\to Q_2$ allow setting $\chi^2=\phi^2=0$ without modifying the form of~$Q_1$.

Next, we analyze the commutation relations of~$Q_1$ and~$Q_2$ with~$Q_3$,
\begin{gather*}
[Q_1,Q_3]=2D(t)+P(\chi^3_t)+I(\phi^3_t)=b_0Q_0+b_1Q_1+b_2Q_2+b_3Q_3,\\ 
[Q_2,Q_3]=D(t^2)+P(t\chi^3_t-\tfrac1\EqOrd\chi^3)+I(t\phi^3_t)=c_0Q_0+c_1Q_1+c_2Q_2+c_3Q_3. 
\end{gather*}
They directly imply that $b_1=b_3=0$, $b_2=2$, $c_1=c_2=0$, $c_3=1$ 
and thus~$\chi^3_t=0$, $t\chi^3_t=(1+\tfrac1\EqOrd)\chi^3$, 
i.e., $\chi^3=0$. 
Since the classifying equation~\eqref{eq:DeterminingEqsLinearEvolEqs2} is only of second order in~$\tau$, 
the three vector fields~$Q_1=D(1)$, $Q_2=D(t)$ and~$Q_3=D(t^2)+I(\phi^3)$, whose parameters $\chi$'s are zero, cannot simultaneously satisfy this equation. 
The derived contradiction completes the proof.
\end{proof}

The above lemmas jointly imply that any appropriate subalgebra of~$\mathfrak g^{\rm ess}_\spanindex$ is spanned by
\begin{enumerate}\itemsep=-0.5ex
 \item the basis vector field~$Q_0=I(1)$ of the kernel~$\mathfrak g^\cap$, $k_0=1$;
 \item $k_1\leqslant 1$ vector fields of the form~$Q_i=P(\chi^i)+I(\phi^i)$, 
       where $i=1,\dots,k_1$, and $\chi^1\ne0$ if~$k_1=1$;
 \item $k_2\leqslant2$ vector fields of the form~$Q_i=D(\tau^i)+P(\chi^i)+I(\phi^i)$, 
       where $i=k_1+1,\dots,k_1+k_2$, and~$\tau^{k_1+1}$, \dots, $\tau^{k_1+k_2}$ are linearly independent.
\end{enumerate}

Moreover, we have that~$\dim\mathfrak g^{\rm ess}_A=k_0+k_1+k_2=1+k_1+k_2\leqslant4$.

\begin{theorem}
 A complete list of~$G^\sim$-inequivalent (and, therefore, $\mathcal G^\sim$-inequivalent) Lie symmetry extensions in the class~\eqref{eq:LinearHomogeneousGaugeEvolEqs} is exhausted by the cases given in Table~\ref{tab:CompleteGroupClassificationLinearEvolEqs}.
\end{theorem}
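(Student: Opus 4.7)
The plan is to enumerate all possible shapes of appropriate subalgebras of $\mathfrak g^{\rm ess}_\spanindex$ using the structural restrictions from Lemmas~\ref{lem:DimOfEssLieInvAlgebra}--\ref{lem:OnOperatorsInvolvingTau}, reduce each such shape to a canonical form under the adjoint action~\eqref{eq:AdjointActionsEvolEqs} of $\pi_*G^\sim$, and then substitute the canonical basis elements into the classifying conditions~\eqref{eq:DeterminingEqsLinearEvolEqs1}--\eqref{eq:DeterminingEqsLinearEvolEqs3} to read off the $A^l$'s. The indexing is via the triple $(k_0,k_1,k_2)=(1,k_1,k_2)$ with $k_1\in\{0,1\}$ and $k_2\in\{0,1,2\}$, so the exhaustive casework splits into the five nontrivial possibilities $(k_1,k_2)\in\{(1,0),(0,1),(1,1),(0,2),(1,2)\}$ (the case $(0,0)$ is the kernel, giving no extension).

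For each case I would first apply Lemma~\ref{lem:OnOperatorsInvolvingTau} to bring $\varpi_*\mathfrak g^{\rm ess}_A$ into one of the canonical forms $\{0\}$, $\langle\p_t\rangle$, $\langle\p_t,t\p_t\rangle$ by an element of $\varpi_*G^\sim$ (i.e.\ a suitable $\mathcal D_*(T)$). I would then use $\mathcal P_*(X^0)$ and $\mathcal I_*(U^1)$ to kill as many $\chi^i$'s and $\phi^i$'s as possible from the $D$-containing generators (this is where the formulas $\mathcal P_*(X^0)D(\tau)=\tilde D(\tau)+\tilde P(\tau X^0_t-\frac1\EqOrd\tau_tX^0)$ and $\mathcal I_*(U^1)D(\tau)=\tilde D(\tau)+\tilde I(\tau U^1_t/U^1)$ are used): when $\tau\neq0$ a single $\chi$ and a single $\phi$ can be annihilated by solving first-order ODEs for $X^0$ and $\log U^1$. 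If $k_1=1$, the residual generator is $P(\chi^1)+I(\phi^1)$ with $\chi^1\neq0$; after the $D$-generator has already fixed a residual gauge I use the commutators $[D(\tau),P(\chi)+I(\phi)]$ to further normalize $\chi^1,\phi^1$ (typically to $(1,0)$ or to a single scalar $\phi^1=\text{const}$). When $\EqOrd$ is even, I apply the discrete involution $\mathcal X$ to pin down signs of $P$-coefficients.

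Having normalized the basis, I substitute each $Q_i$ into~\eqref{eq:DeterminingEqsLinearEvolEqs1}--\eqref{eq:DeterminingEqsLinearEvolEqs3}. Each substitution produces a first-order linear PDE in $(t,x)$ for the arbitrary elements $A^j$ ($j=2,\dots,\EqOrd-2$), $A^1$ and $A^0$, whose integration by the method of characteristics gives explicit functional forms. In the presence of multiple generators I intersect the resulting constraints: closure of $\mathfrak g^{\rm ess}_A$ under Lie bracket (already enforced by the canonical form via the strategy in the proof of Lemma~\ref{lem:OnOperatorsInvolvingTau}) is automatic, but compatibility of the several classifying PDEs yields concrete formulas for $A^l$ depending only on the remaining free functional/constant parameters. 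For the $k_2=2$ cases the resulting $A^l$'s will be purely polynomial in $x$ with $t$-independent coefficients (because $D(t)$ rules out $t$-dependence and $D(1)$ rules out the rest), modulo genuine continuous parameters that must then be reduced under the still-unused residual subgroup of $\pi_*G^\sim$.

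The main obstacle I expect is the bookkeeping of the \emph{residual} equivalence subgroup at each stage: after normalizing the chosen basis one must still act by the stabilizer of that basis (in $\pi_*G^\sim$ and in $\mathrm{GL}$ of the algebra) to remove fake parameters from the $A^l$'s and to detect hidden equivalences between nominally distinct cases. Concretely, one must verify that scalings $t\to\lambda t$, $x\to\mu x$, $u\to\nu u$ and discrete reflections (for even $\EqOrd$) from $\varpi_*G^\sim$ do not collapse two classification cases, and dually that no further extension of $\mathfrak g^{\rm ess}_A$ is produced by specializing the free parameters in the $A^l$'s (which would signal a missing case). Once this verification is complete for each $(k_1,k_2)$, the results assemble into Table~\ref{tab:CompleteGroupClassificationLinearEvolEqs}, which by the preceding proposition yields a complete set of $G^\sim$-inequivalent (and hence, by normalization of the parent class, $\mathcal G^\sim$-inequivalent) extensions.
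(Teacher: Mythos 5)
Your plan follows essentially the same route as the paper's proof: the same case split by the $G^\sim$-invariant integers $(k_1,k_2)$, normalization of basis elements by the adjoint actions~\eqref{eq:AdjointActionsEvolEqs} and linear recombination, imposition of closure under the Lie bracket when $k_1+k_2>1$, evaluation of the classifying conditions~\eqref{eq:DeterminingEqsLinearEvolEqs1}--\eqref{eq:DeterminingEqsLinearEvolEqs3} on the normalized basis, and gauging of the resulting tuples~$A$ by the stabilizer~$G^\sim_{\mathfrak s}$ of the chosen basis (your ``residual subgroup''). Two of your anticipated outcomes are off, however, and would need correcting in the execution. First, for $(k_1,k_2)=(1,1)$ the bracket of $D(1)$ with $P(\chi)+I(\phi)$ only forces $\chi_t=a_1\chi$, $\phi_t=a_1\phi+a_0$; the subcase $a_1\ne0$ cannot be normalized to constant~$\chi$ and yields the inequivalent generator $P(e^t)+\sigma I(e^t)$, i.e.\ the bifurcation into Cases~4a and~4b of Table~\ref{tab:CompleteGroupClassificationLinearEvolEqs}, which your anticipated normal forms ``$(1,0)$ or $\phi^1=\mathrm{const}$'' would miss. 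Second, for $k_2=2$ the generator $D(t)$ does not leave the $A^l$ polynomial in~$x$: combined with $A^l_t=0$ coming from $D(1)$, it imposes the Euler-type equation $xA^l_x+(\EqOrd-l)A^l=0$, whose solutions are the negative powers $A^l=c_lx^{l-\EqOrd}$ of Case~2, not polynomials. Neither point is a flaw of the method itself---solving the commutator ODEs and the classifying PDEs honestly, and then gauging by $G^\sim_{\mathfrak s}$ as you propose, reproduces exactly the paper's list---but both steps must actually be carried out rather than asserted in advance.
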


\begin{table}[!ht]
\begin{center}
\caption{Complete group classification of the class~\eqref{eq:LinearHomogeneousGaugeEvolEqs}.
\label{tab:CompleteGroupClassificationLinearEvolEqs}}
\def\arraystretch{1.45}
\begin{tabular}{|c|c|c|c|l|}
\hline
no. & $k_1$ & $k_2$ & $A^l$                             & \hfil Basis of~$\mathfrak g^{\rm ess}_A$\\
\hline
0 & 0 & 0 & $A^l=A^l(t,x)$                              & $I(1)$\\
1 & 0 & 1 & $A^l=A^l(x)$                                & $I(1), D(1)$\\
2 & 0 & 2 & $A^l=c_lx^{l-\EqOrd}$                            & $I(1), D(1), D(t)$\\
3 & 1 & 0 & $A^j=A^j(t),\quad A^1=0,\quad A^0=f(t)x$    & $I(1), P(1)+I\left(\int\!f\,\mathrm{d}t\right)$\\
4a& 1 & 1 & $A^j=c_j,\quad A^1=0,   \quad A^0=\sigma x$ & $I(1), D(1), P(1)+\sigma I(t)$\\
4b& 1 & 1 & $A^j=c_j,\quad A^1=-x,  \quad A^0=\sigma x$ & $I(1), D(1), P(e^{t})+\sigma I(e^t)$\\
5 & 1 & 2 & $A^l=0$                                     & $I(1), D(1), D(t), P(1)$\\
\hline
\end{tabular}
\end{center}
{\footnotesize
Here $l$ runs from~$0$ to $\EqOrd-2$, $j$ runs from $2$ to $\EqOrd-2$, $c$'s and~$\sigma$ are constants,
and all functional parameters are smooth functions of their arguments.
The presented vector fields span $\mathfrak g^{\rm ess}_A$ 
if the equation~$\mathcal L_A$ is not $G^\sim$-equivalent to an equation with a wider essential Lie invariance algebra. 
In particular, at least one of the constant parameters in~$A$'s should be nonzero in Cases~2, 4a and~4b for this.
The parameter functions in Case~3 should be neither simultaneously constant 
nor, up to shifts of~$t$, of the form given in Remark~\ref{rem:AlternativeFormOfCase4b} below. 
In Case~1, the tuple~$A$ should be neither of the form 
$(c_0(x+a)^{-\EqOrd}+b_0,c_1(x+a)^{1-\EqOrd}+b_2x+b_1,c_j(x+a)^{j-\EqOrd},j=2,\dots,\EqOrd-2)$ nor 
 of the form $(b_0x+c_0,b_1x+c_1,c_j,j=2,\dots,\EqOrd-2)$ for some constants~$a$, $b$'s and~$c$'s. 
See also the proof for gauging of parameters.}
\end{table}

\begin{proof}
The classification cases follow from studying the different values of~$k_1$ and~$k_2$, 
and the case enumeration in the proof coincides with that in Table~\ref{tab:CompleteGroupClassificationLinearEvolEqs}. 
Choosing a general representation for basis vector fields of an appropriate subalgebra~$\mathfrak s$ of~$\mathfrak g^{\rm ess}_\spanindex$
for each possible value of~$(k_1,k_2)$, 
we simplify them using the adjoint actions of equivalence transformations given in~\eqref{eq:AdjointActionsEvolEqs} 
and linear recombination of these basis vector fields. 
For each case, we also include~$I(1)$ in the basis as the basis element of the kernel invariance algebra~$\mathfrak g^\cap$.
In view of Lemma~\ref{lem:EssentialAlgebraContainingKernel}, 
any vector field $D(\tau)+P(\chi)+I(\phi)$ from $\mathfrak g^{\rm ess}_A\setminus\mathfrak g^\cap$ has $(\tau,\chi)\ne(0,0)$ and, hence,
can be reduced, up to $\pi_*G^\sim$-equivalence, to the form~$D(1)$ or~$P(1)+I(\phi)$ if~$\tau\ne0$ or~$\tau=0$ and~$\chi\ne0$, respectively.
If $k_1+k_2>1$, after a preliminary simplification of basis vector fields we should take into account the fact 
that the corresponding span~$\mathfrak s$ is closed with respect to the Lie brackets of vector fields. 
Then we make further simplifications. 
Evaluating the classifying equations \eqref{eq:DeterminingEqsLinearEvolEqs1}--\eqref{eq:DeterminingEqsLinearEvolEqs3} 
at the simplified basis vector fields results in a system of differential equations for the arbitrary elements~$A$. 
This system should be integrated up to equivalence transformations 
whose pushforwards by~$\pi$ preserve, up to linear recombination, the form of the simplified basis vector fields
while possibly changing parameters that may still remain in these vector fields. 
For convenience, the group of such transformations is denoted by~$G^\sim_{\mathfrak s}$. 

Again, in what follows $a$'s, $b$'s and~$c$'s denote constants, $l=0,\dots,\EqOrd-2$, $m=1,\dots,\EqOrd-2$, $j=2,\dots,\EqOrd-2$. 

\medskip
 
\noindent\textbf{0.}\ $k_1=k_2=0$. 
This case corresponds to the general equation from the class~\eqref{eq:LinearHomogeneousGaugeEvolEqs}, 
for which there is no extension of the kernel invariance algebra, $\mathfrak s=\mathfrak g^\cap$, and $G^\sim_{\mathfrak s}=G^\sim$.
 
 \medskip
 
\noindent\textbf{1.}\ $k_1=0$, $k_2=1$. 
Up to~$\pi_*G^\sim$-equivalence, the algebra~$\mathfrak s$ is spanned by the vector fields~$D(1)$ and~$I(1)$. Then the classifying equations \eqref{eq:DeterminingEqsLinearEvolEqs1}--\eqref{eq:DeterminingEqsLinearEvolEqs3} imply that~$A^l_t=0$. 
The constraints singling out the subgroup~$G^\sim_{\mathfrak s}$ from~$G^\sim$ are $T_{tt}=0$, $X^0_t=0$ and $(U^1_t/U^1)_t=0$.

\medskip
 
\noindent\textbf{2.}\ $k_1=0$, $k_2=2$. 
Similarly to the proof of Lemma~\ref{lem:OnOperatorsInvolvingTau}, we can assume up to $\pi_*G^\sim$-equivalence that 
$\mathfrak s=\langle I(1),D(1),D(t)\rangle$. 
Evaluating the classifying equations \eqref{eq:DeterminingEqsLinearEvolEqs1}--\eqref{eq:DeterminingEqsLinearEvolEqs3} 
at the vector fields $D(1)$ and~$D(t)$ gives 
the system $A^l_t=0$, $xA^l_x+(\EqOrd-l)A^l=0$, 
whose general solution is $A^l=c_lx^{l-\EqOrd}$, where $c_l$'s are arbitrary constants. 
The subgroup~$G^\sim_{\mathfrak s}$ is singled out from~$G^\sim$ by the constraints $T_{tt}=0$, $X^0=U^1=0$ 
and thus $\pi_*G^\sim_{\mathfrak s}$ coincides with~the essential point symmetry group of~$\mathcal L_A$. 
This is why no simplification of~$A$ is possible here. 
 
\medskip
 
\noindent\textbf{3.}\ $k_1=1$, $k_2=0$. 
Up to $\pi_*G^\sim$-equivalence, the extension to the kernel algebra is given by a single operator of the form~$P(1)+I(\phi)$ with no restrictions placed on~$\phi$. 
The classifying equations \eqref{eq:DeterminingEqsLinearEvolEqs1}--\eqref{eq:DeterminingEqsLinearEvolEqs3} imply 
$A^m_x=0$ and $A^0_x=\phi_t$. 
The group~$G^\sim_{\mathfrak s}$ consists of the equivalence transformations with $T_{tt}=0$.
Hence the general solution of the system for~$A$ can be simplified by setting $A^1=0$ and $A^0=f(t)x$, where $f=\phi_t$. 
Equivalence transformations preserving this form of~$A^1$ and~$A^0$ 
are exhausted by the ones with $T_{tt}=0$, $X^0_t=0$ and $U^1_t/U^1=fX^0/X^1$.

\medskip

\noindent\textbf{4.}\ $k_1=1$, $k_2=1$. 
Up to $\pi_*G^\sim$-equivalence, $\mathfrak s=\langle Q_0,Q_1,Q_2\rangle$, 
where $Q_0=I(1)$, $Q_1=P(\chi)+I(\phi)$ with $\chi\ne0$ and $Q_2=D(1)$. 
Commuting $Q_2$ and~$Q_1$ gives
\[
 [Q_2,Q_1]=P(\chi_t)+I(\phi_t)=a_0Q_0+a_1Q_1+a_2Q_2,
\] 
which implies that~$a_2=0$. Then, two subcases are possible depending on the value of~$a_1$. 

Firstly, $a_1=0$ and hence $\chi_t=0$ and $\phi_t=a_0$, implying~$\phi=a_0t+a_3$. 
We can set $\chi=1$ and $a_3=0$ by recombining $(1/\chi)(Q_1-a_3Q_0)\to Q^1$, 
i.e., we obtain the new $Q_1=P(1)+\sigma I(t)$, where $\sigma:=a_0/\chi$.
The system derived by evaluating the classifying equations \eqref{eq:DeterminingEqsLinearEvolEqs1}--\eqref{eq:DeterminingEqsLinearEvolEqs3} 
at $Q^1$ and $Q^2$ consists of the equations $A^l_t=0$, $A^m_x=0$, $A^0_x=\sigma$, 
and its general solution is $A^m=c_m$, $A^0=\sigma x+c_0$.
The subgroup~$G^\sim_{\mathfrak s}$ is singled out from~$G^\sim$ by the constraints $T_{tt}=0$, $X^0_{tt}=0$ and $(U^1_t/U^1)_t=\sigma X^0_t$. 
Hence $c_1=c_0=0\bmod G^\sim$. This yields Case~4a. 
One of the constants~$c_j$'s or~$\sigma$, if it is nonzero, can always be scaled to $\pm1$, 
and further gauging to~1 is possible for a~$c_j$ if $\EqOrd-j$ is odd or for~$\sigma$ if $\EqOrd$ is even.

If~$a_1\ne0$, then we can scale~$a_1$ to $1$ by scaling of~$t$ and set~$a_0$ to zero upon linearly combining~$Q_1$ with~$Q_0$. 
Then the above commutation relation implies that~$\chi_t=\chi$ and~$\phi_t=\phi$, 
yielding, up to scaling of~$Q_1$, the solution~$\chi=e^t$ and~$\phi=\sigma e^t$, 
where $\sigma=\const$, i.e., we get $Q_1=P(e^t)+\sigma I(e^t)$. 
The invariance of~$\mathcal L_A$ with respect to~$Q_1$ and~$Q_2$ requires 
for~$A$ to satisfy the system $A^l_t=0$, $A^j_x=0$, $A^1_x=-1$, $A^0_x=\sigma$ 
and hence to be of the form $A^j=c_j$, $A^1=-x+c_1$, $A^0=\sigma x+c_0$.
The constraints for equivalence transformations constituting the subgroup~$G^\sim_{\mathfrak s}$ 
are $T_t=1$, $X^0_{tt}=X^0_t$ and $(U^1_t/U^1)_t=\sigma X^0_t$. 
Since $c_1=c_0=0\bmod G^\sim$, we obtain Case~4b. 
If $\EqOrd$ is even, then one of the constants~$c_j$'s with odd~$j$ 
or~$\sigma$, if this constant is nonzero, can be assumed positive 
due to alternating the sign of~$x$. 
 
\medskip

\noindent\textbf{5.}\ $k_1=1$, $k_2=2$. 
After a preliminary simplification, the vector fields spanning~$\mathfrak s$ take the form 
$Q_0=I(1)$, $Q_1=P(\chi^1)+I(\phi^1)$, $Q_2=D(1)$ and $Q_3=D(t)+P(\chi^3)+I(\phi^3)$, where $\chi^1\ne0$. 
The nonzero commutation relations between the basis elements are exhausted by 
\begin{gather*}
[Q_2,Q_1]=P(\chi^1_t)+I(\phi^1_t)=a_0Q_0+a_1Q_1+a_2Q_2+a_3Q_3,\\
[Q_3,Q_1]=P(t\chi^1_t-\tfrac1\EqOrd\chi^1)+I(t\phi^1_t)=b_0Q_0+b_1Q_1+b_2Q_2+b_3Q_3,\\
[Q_2,Q_3]=D(1)+P(\chi^3_t)+I(\phi^3_t)=c_0Q_0+c_1Q_1+c_2Q_2+c_3Q_3,
\end{gather*}
which immediately imply~$a_2=a_3=0$, $b_2=b_3=0$, $c_2=1$, $c_3=0$, and then give the equations
\begin{gather*}
\chi^1_t=a_1\chi^1, \quad \phi^1_t=a_1\phi^1+a_0,\\
t\chi^1_t-\tfrac1\EqOrd\chi^1=b_1\chi^1, \quad t\phi^1_t=b_1\phi^1+b_0,\\
\chi^3_t=c_1\chi^1, \quad \phi^3_t=c_1\phi^1+c_0.
\end{gather*}
The compatibility of the first four equations with respect to~$\chi^1$ and~$\phi^1$
requires $a_1=0$, $b_1=-\frac1\EqOrd$ and $a_0=0$. 
Hence the parameters~$\chi^1$ and~$\phi^1$ are constants. 
We can set $\chi^1=1$ and $\phi^1=0$ by recombining $(1/\chi^1)(Q_1-\phi^1Q_0)\to Q^1$ and get $Q_1=P(1)$.

Due to the previous simplification of~$Q_1$, 
the parameters~$\chi^3$ and~$\phi^3$ satisfy the equations $\chi^3_t=c_1$ and $\phi^3_t=c_0$, 
which can be integrated to~$\chi^3=c_1t$ and~$\phi^3=c_0t$ 
with the integration constants set to zero at once after recombining~$Q_3$ with~$Q_1$ and~$Q_0$, respectively. 
Applying the adjoint actions~$\mathcal P_*(\tilde c_1t)$ with~$\tilde c_1=-\EqOrd c_1/(\EqOrd-1)$ and~$I_*(e^{-c_0t})$, 
we can set~$c_1=c_0=0$ and hence obtain~$Q_3=D(t)$. 
The form of~$Q_2$ is restored by recombining~$Q_2$ with~$Q_1$ and~$Q_0$.
The evaluation of the classifying equations~\eqref{eq:DeterminingEqsLinearEvolEqs1}--\eqref{eq:DeterminingEqsLinearEvolEqs3} 
at $Q_1$, $Q_2$ and $Q_3$ gives the system for~$A$ with the only zero solution, $A^l=0$. 
\end{proof}

\begin{corollary}
 A linear $\EqOrd$th order evolution equation is reduced to the simplest form $u_t=u_\EqOrd$ by a point transformation 
 if and only if its essential Lie invariance algebra is four-dimensional. 
\end{corollary}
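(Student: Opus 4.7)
The plan is to derive both directions directly from Lemma~\ref{lem:DimOfEssLieInvAlgebra} and the exhaustive list in Table~\ref{tab:CompleteGroupClassificationLinearEvolEqs}. By Remark~\ref{rem:GroupClassificationOfLinearHomogeneousGaugeEvolEqs}, any equation from the class~\eqref{eq:GeneralLinearEvolutionEquation} is $G^\sim_{\eqref{eq:GeneralLinearEvolutionEquation}}$-equivalent, via a point transformation, to an equation from the gauged homogeneous subclass~\eqref{eq:LinearHomogeneousGaugeEvolEqs}; moreover, this reduction identifies the essential Lie invariance algebras of the two equations as Lie algebras, so in particular it preserves their dimensions. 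I would therefore argue entirely within the class~\eqref{eq:LinearHomogeneousGaugeEvolEqs} and appeal to the simple observation that $u_t=u_\EqOrd$ already belongs to this class.

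For the backward direction, I would observe that Lemma~\ref{lem:DimOfEssLieInvAlgebra} bounds $\dim\mathfrak{g}^{\rm ess}_A\leqslant 4$, so a four-dimensional essential Lie invariance algebra is necessarily maximal. Inspecting the row-by-row dimensions $1+k_1+k_2$ in Table~\ref{tab:CompleteGroupClassificationLinearEvolEqs}, I would verify that only Case~5 attains the value~$4$, and that the corresponding tuple of arbitrary elements is $A^l=0$, i.e., the equation is $u_t=u_\EqOrd$. Hence the given equation is $G^\sim$-equivalent, and in particular point-equivalent, to $u_t=u_\EqOrd$.

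For the forward direction, I would argue that any point transformation $\varphi$ carrying an equation~$\mathcal L_A$ onto $u_t=u_\EqOrd$ induces, via its push-forward, an isomorphism $\varphi_*\colon\mathfrak g_A\to\mathfrak g_0$ of maximal Lie invariance algebras. Case~5 of Table~\ref{tab:CompleteGroupClassificationLinearEvolEqs} exhibits the essential Lie invariance algebra of $u_t=u_\EqOrd$ as $\langle I(1),D(1),D(t),P(1)\rangle$, which is four-dimensional; and by the uniform semi-normalization of~\eqref{eq:LinearHomogeneousGaugeEvolEqs} with respect to linear superposition of solutions (established in Section~\ref{sec:EquivalenceGroupoidLinearEvolutionEqs}), $\varphi_*$ maps the ideal~$\mathfrak g^{\rm lin}_A$ onto $\mathfrak g^{\rm lin}_0$, so the induced isomorphism descends to an isomorphism between the essential parts $\mathfrak g^{\rm ess}_A$ and $\mathfrak g^{\rm ess}_0$. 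Consequently, $\dim\mathfrak g^{\rm ess}_A=4$.

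The only subtle point is the last claim, namely that an arbitrary point transformation between two equations of~\eqref{eq:GeneralLinearEvolutionEquation}—and not merely a $G^\sim$-equivalence—respects the essential/linear-superposition splitting of maximal Lie invariance algebras. I would handle this by recalling that every admissible transformation between equations in the gauged class~\eqref{eq:LinearHomogeneousGaugeEvolEqs} decomposes, according to Definition~\ref{DefinitionOfUniformlySemi-normalizedClasses} applied in Section~\ref{sec:EquivalenceGroupoidLinearEvolutionEqs}, as a composition of an element of $\pi_*G^\sim$ with linear-superposition symmetries of either equation; the equivalence factor preserves the splitting by construction, while the linear-superposition factors act trivially on the essential part.
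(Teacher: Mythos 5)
Your proposal is correct and follows essentially the same route as the paper, which states this corollary as an immediate consequence of Lemma~\ref{lem:DimOfEssLieInvAlgebra} and Table~\ref{tab:CompleteGroupClassificationLinearEvolEqs}: only Case~5, i.e.\ $u_t=u_\EqOrd$, attains the maximal essential dimension four, while the gauging of Remark~\ref{rem:GroupClassificationOfLinearHomogeneousGaugeEvolEqs} and the uniform semi-normalization guarantee that point equivalence preserves the dimension of the essential Lie invariance algebra. Your extra care in the forward direction (that an arbitrary admissible transformation maps $\mathfrak g^{\rm lin}_A$ onto $\mathfrak g^{\rm lin}_0$ and hence descends to an isomorphism of the essential parts) is exactly the fact the paper uses implicitly, so there is nothing to add.
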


\begin{remark}
Case~1 (resp.\ Case~3) of Table~\ref{tab:CompleteGroupClassificationLinearEvolEqs} contains 
equations equivalent to each other. 
At the same time, the corresponding equivalence transformations 
cannot properly be used for gauging of arbitrary elements. 
\end{remark}

\begin{remark}\label{rem:AlternativeFormOfCase4b}
 The form of the algebra~$\mathfrak g^{\rm ess}_A$ in Case~4b of Table~\ref{tab:CompleteGroupClassificationLinearEvolEqs} is not common. 
 The equivalence transformation~\eqref{eq:PointTransformationBetweenEvolEqs}--\eqref{eq:PointTransformationBetweenEvolEqsAE1} 
 with $T=-e^{-\EqOrd t}$, $X^0=0$ and $U^1=1$ reduces Case~4b to the case 
\begin{center}%
\def\arraystretch{1.5}\tabcolsep=1ex
\begin{tabular}{|c|c|c|c|l|}
\hline
$\tilde{\text{4b}}$ & 1 & 1 & $A^j=\tilde c_j|t|^{j/\EqOrd-1},\ A^1=0,\ A^0=\tilde\sigma|t|^{-1/\EqOrd-1} x$ & $I(1), D(t), P(1)-\EqOrd\tilde\sigma\sgn(t)I(|t|^{-1/\EqOrd})$\\
\hline
\end{tabular}
\end{center}
\noindent
where $\tilde c_j=\EqOrd^{j/\EqOrd-1}c_j$, $\tilde\sigma=\EqOrd^{-1/\EqOrd-1}\sigma$,
and we omit tildes of the variables and the arbitrary elements. 
Case~$\tilde{\text{4b}}$ enhances and extends Case~5 of \cite[Table~1]{gung04a} with $\EqOrd=3$ and Cases~5 and~7 of \cite[Table~1]{huan12a} with $\EqOrd=4$ 
to arbitrary~$\EqOrd$.
While Case~$\tilde{\text{4b}}$ is more consistent with the tree of Lie symmetry extensions in Table~\ref{tab:CompleteGroupClassificationLinearEvolEqs} 
than Case~4b, and the possibility of the further Lie symmetry extension to Case~5 is more obvious for it, 
the parameter functions~$A$'s become time-dependent with fractional negative powers of~$|t|$. 
This is why the presented form of Case~4b is simpler and thus seems preferable.  
\end{remark}

\begin{remark}\label{rem:ComparisonWithThreeFourOrderCase}
It is worth comparing the entire group classification results 
given in \cite[Table~1]{gung04a} and \cite[Table~1]{huan12a}
for the classes of (1+1)-dimensional third- and fourth-order linear evolution equations
with Table~\ref{tab:CompleteGroupClassificationLinearEvolEqs} for $\EqOrd=3$ and $\EqOrd=4$, respectively.
These results should in fact be identical. 

Thus, Cases~1--6 of~\cite[Table~1]{gung04a} correspond to Cases~1, 3, 2, 4a, $\tilde{\text{4b}}$ and~5 
of Table~\ref{tab:CompleteGroupClassificationLinearEvolEqs}, respectively. 
A minor inconvenience of \cite[Table~1]{gung04a} is 
that the correct condition for~$f$ in Case~2 is $\ddot f(t)\ne0$ but not $\dot f(t)\ne0$.
A more essential inconvenience is that the parameter~$a$ in Case~6 of \cite[Table~1]{gung04a} can be set to zero 
by the equivalence transformation with~$T=t$, $X^0=at$ and~$U^1=1$, 
which leads to the simplest equation~$u_t=u_3$. 

In~\cite{huan12a}, some classification cases were redundantly split into pairs of their subcases. 
As a result, Cases~1, 8, (2,\,3), (4,\,6), (5,\,7) and~10 of~\cite[Table~1]{huan12a} 
correspond to our Cases~1, 2, 3, 4a, $\tilde{\text{4b}}$ and~5, respectively. 
The major weakness  of~\cite[Table~1]{huan12a} is Case~9, 
which can be obviously simplified by $\mathcal P(b/a)$ to the case with 
$A=C=0$, $B=a_1x^{-3}+cx$ (in the notation of~\cite{huan12a}), where $a_1=-cb^4a^{-4}$, 
and the essential Lie invariance algebra $\langle I(1),D(1),D(e^{-4ct})\rangle$.
Moreover, the simplified Case~9 is mapped by the equivalence transformation $\mathcal D(e^{4ct})$  
to a subcase of Case~8 of the same table
and, therefore, Case~9 should be excluded from the classification~list.  
\end{remark}

\begin{remark}
Any (1+1)-dimensional second-order linear evolution equation is similar to an equation of the same kind 
whose essential Lie invariance algebra is a subalgebra of the essential Lie invariance algebra of 
the simplest equation $u_t=u_2$, which is the heat equation. 
There is no analogue of this property for (1+1)-dimensional $\EqOrd$th order linear evolution equations with $\EqOrd>2$ 
although the essential Lie invariance algebra of the simplest equation $u_t=u_\EqOrd$ is still of maximum dimension among such equations.
\end{remark}

\section{Exact solutions}\label{sec:ExactSolutionsLinEvolEqs}

A standard procedure for finding exact solutions of a given system of partial differential equations 
with nonzero maximal Lie invariance algebra~$\mathfrak a$
is to carry out Lie reductions. 
Here one firstly classifies inequivalent subalgebras, of appropriate dimensions, of the algebra~$\mathfrak a$ and then constructs solution ansatzes using invariants of the listed inequivalent subalgebras. Substituting these ansatzes into the original system then leads to reduced systems with fewer independent variables, which are in general easier to solve than the original system. If a solution of a reduced system is constructed, then substituting it into the associated ansatz gives a particular solution of the original system. 
See~\cite{olve86Ay,ovsi82Ay} for more details.

For classes of differential equations, it seems natural to separately carry out the whole Lie reduction procedure for each inequivalent case of Lie symmetry extension. However, for classes that are normalized it is in fact sufficient to classify only low-dimensional subalgebras of the equivalence algebra up to $G^\sim$-equivalence and carry out Lie reductions using the invariants of the projections of the associated basis vector fields to the space of equation variables~\cite{poch16a}. For classes that are not normalized, this strategy does not yield complete reduction results.

In the present case, the class~\eqref{eq:LinearHomogeneousGaugeEvolEqs} is uniformly semi-normalized with respect to linear superposition of solutions. This is why all ``traditional'' Lie reductions for equations from this class are obtainable upon classifying $\pi_*G^\sim$-inequivalent subalgebras of~$\mathfrak g^{\rm ess}_\spanindex=\pi_*\mathfrak g^\sim$, which is equivalent to classifying inequivalent subalgebras of~$\mathfrak g^\sim$. 
(See also below for involving vector fields from~$\mathfrak g^{\rm lin}_\spanindex$ in the process of Lie reduction.)
Since equations from the class~\eqref{eq:LinearHomogeneousGaugeEvolEqs} have two independent variables, reductions of related equations with respect to one-dimensional subalgebras of~$\mathfrak g^{\rm ess}_\spanindex$ yield ordinary differential equations.

\begin{proposition}\label{pro:Inequiv1DSubalgOfEssSPanAlg}
 An optimal list of one-dimensional subalgebras of the algebra~$\mathfrak g^{\rm ess}_\spanindex$ is exhausted by the algebras
\begin{equation}\label{eq:ClassificationSubalgebrasLinearEvolEqs}
 \langle D(1)\rangle,\quad \langle P(1)+I(\phi)\rangle,\quad \langle I(\hat\phi)\rangle,
\end{equation}
where~$\phi=\phi(t)$ is an arbitrary smooth function of~$t$, and~$\hat\phi\in\{1,t\}$.
\end{proposition}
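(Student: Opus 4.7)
The plan is to take an arbitrary vector field $Q=D(\tau)+P(\chi)+I(\phi)$ spanning a one-dimensional subalgebra of $\mathfrak{g}^{\rm ess}_\spanindex$ and reduce it, via the action of $\pi_*G^\sim$, to one of the listed canonical forms. A preliminary observation is that the adjoint actions in~\eqref{eq:AdjointActionsEvolEqs} preserve the ``vanishing pattern'' of $(\tau,\chi)$: $\mathcal D_*(T)$ merely rescales the $D$-, $P$- and $I$-components, while $\mathcal P_*(X^0)$ and $\mathcal I_*(U^1)$ only add $P$- or $I$-contributions to $D(\tau)$ and act trivially on $P(\chi)$ and $I(\phi)$. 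Hence we can split into the three mutually exclusive cases (i)~$\tau\not\equiv0$, (ii)~$\tau\equiv0$ and $\chi\not\equiv0$, (iii)~$\tau\equiv0$ and $\chi\equiv0$; these correspond precisely to the three entries of the optimal list~\eqref{eq:ClassificationSubalgebrasLinearEvolEqs}.

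In case~(i), choose $T$ with $T_t=1/\tau$, so that $\mathcal D_*(T)$ sends $D(\tau)$ to $\tilde D(1)$; after this reparameterization $Q$ takes the form $D(1)+P(\tilde\chi)+I(\tilde\phi)$ for some smooth functions $\tilde\chi,\tilde\phi$ of the new time variable. Now apply $\mathcal P_*(X^0)$ with $X^0$ chosen so that $X^0_t=\tilde\chi$; from~\eqref{eq:AdjointActionsEvolEqs} and $\tau=1$ this adds $\tilde P(-\tilde\chi)$ to the $P$-component, killing it. Analogously, $\mathcal I_*(U^1)$ with $U^1=\exp(\int\tilde\phi\,\mathrm dt)$ eliminates the $I$-component. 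This yields the representative $\langle D(1)\rangle$. In case~(ii), write $Q=P(\chi)+I(\phi)$ with $\chi\not\equiv0$. Choose $T$ with $T_t=\chi^{-\EqOrd}$, so that $\mathcal D_*(T)P(\chi)=\tilde P(1)$; the $I$-part is merely re-expressed in the new variable, yielding $\langle P(1)+I(\phi)\rangle$ with $\phi$ an arbitrary smooth function of the new time. No further simplification is available, since $\mathcal P_*(X^0)$ and $\mathcal I_*(U^1)$ leave $P(1)+I(\phi)$ unchanged, and any $\mathcal D_*(T)$ that preserves $P(1)$ must satisfy $T_t\equiv 1$, hence acts trivially on~$\phi$ as well (up to a constant shift of~$t$ which can be absorbed into $\phi$). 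In case~(iii), $Q=I(\phi)$ with $\phi\not\equiv0$; since $\mathcal D_*(T)I(\phi)=\tilde I(\phi\circ T^{-1})$ and the subalgebra is invariant under nonzero scaling of $Q$, we can reduce $\phi$ to any nonvanishing smooth representative in its $\mathcal D_*$-orbit modulo scaling. A constant $\phi$ can be scaled to $1$, while a non-constant $\phi$ can be made to coincide locally with $t$ by choosing $T=\phi^{-1}$, giving the two representatives $\langle I(1)\rangle$ and $\langle I(t)\rangle$.

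Finally one must check that the three listed families are pairwise $\pi_*G^\sim$-inequivalent and that within the family $\langle P(1)+I(\phi)\rangle$ different $\phi$'s actually yield inequivalent subalgebras. The first point follows immediately from the preservation of the vanishing pattern of $(\tau,\chi)$ noted above. The second is the main technical point: one has to verify that any $\mathcal T\in\pi_*G^\sim$ mapping $P(1)+I(\phi_1)$ to a nonzero scalar multiple of $P(1)+I(\phi_2)$ must produce a trivial relation between $\phi_1$ and $\phi_2$; this is where I expect a short but slightly delicate computation, since $\phi$ plays the role of a functional modulus and the residual freedom must be carefully pinned down. Once this is done, the optimal list~\eqref{eq:ClassificationSubalgebrasLinearEvolEqs} is complete.
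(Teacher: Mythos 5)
Your proof is correct and follows essentially the same route as the paper's: split according to the vanishing pattern of $(\tau,\chi)$ and normalize each case by the adjoint actions~\eqref{eq:AdjointActionsEvolEqs}, arriving at $D(1)$, $P(1)+I(\phi)$ and $I(1)$ or $I(t)$. Only trivial fixes are needed---choose $X^0_t=-\tilde\chi$, $U^1=\exp(-\int\tilde\phi\,\mathrm{d}t)$ and $T=\phi$ rather than $\phi^{-1}$---and the final verification you postpone is not actually required for the statement as phrased (carried out, it would only exhibit the harmless residual freedom $\phi(t)\mapsto c\,\phi(at+b)$ inside the second family, which the paper, like you, tolerates by leaving $\phi$ arbitrary).
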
 

\begin{proof}
 The most general element of the algebra~$\mathfrak g^{\rm ess}_\spanindex$ is of the form
$Q=D(\tau)+P(\chi)+I(\phi).$
 
 If~$\tau\ne0$ we can use the adjoint actions~$\mathcal D_*(T)$, $\mathcal P_*(X^0)$ and~$\mathcal I_*(U^1)$ given in~\eqref{eq:AdjointActionsEvolEqs} for suitable functions~$T$, $X^0$ and~$U^1$ to set~$\tau=1$, $\chi=0$ and~$\phi=0$.
 
 If $\tau=0$ but~$\chi\ne0$ we can use the adjoint action~$\mathcal D_*(T)$ to set $\chi=1$ and no further simplifications are possible, leaving~$\phi$ an arbitrary function of~$t$.
 
 If $\tau=\chi=0$ and~$\phi\ne0$, two possible cases arise. If~$\phi=\const$ then we can set~$\phi=1$ by scaling~$Q$. If~$\phi\ne\const$, then using the adjoint action~$\mathcal I_*(T)$ we can set~$\phi=t$. 
\end{proof}

The algebra~$\langle I(t)\rangle$ is not a Lie symmetry generator for an equation from the class~\eqref{eq:LinearHomogeneousGaugeEvolEqs}.
The algebra~$\langle I(1)\rangle$ cannot be used for Lie reductions as it does not allow one to make an ansatz for~$u$. This is why only the first two algebras given in~\eqref{eq:ClassificationSubalgebrasLinearEvolEqs} have to be considered.

\medskip

\noindent\textit{Algebra $\langle D(1)\rangle$}. 
This reduction is thus relevant for equations equivalent to 
equations given by Case~2 in Table~\ref{tab:CompleteGroupClassificationLinearEvolEqs} 
or its further Lie symmetry extensions, i.e., $A^l=A^l(x)\bmod G^\sim$, $l=0,\dots,\EqOrd-2$. 
Functionally independent invariants of the infinitesimal generator~$D(1)$ are~$\omega=x$ and $v=u$. 
The ansatz~$u=v(\omega)$ reduces a relevant equation~$\mathcal L_A$ of the form~\eqref{eq:LinearHomogeneousGaugeEvolEqs} 
to an (in general, variable-coefficient) $\EqOrd$th order linear ordinary differential equation in the rational form,
\[
\mathcal L_A^{\rm r}\colon\quad v_n+A^lv_l=0,
\]
where~$v_k=\mathrm{d}^k v/\mathrm{d}\omega^k$. 
The essential Lie invariance algebra of~$\mathcal L_A^{\rm r}$ consists of the vector fields 
of the form $D^{\rm r}(\zeta)+cv\p_v$, where 
$D^{\rm r}(\zeta)=\zeta\p_\omega+\frac12(\EqOrd-1)\zeta_\omega v\p_v$,  
$\zeta$ is a smooth function of~$\omega$ and $c$ is a constant;
see~\cite{boyk15a} and references therein.
Moreover, each vector field of the above form is a Lie symmetry generator 
of the equation~$\mathcal L_A^{\rm r}$ for certain values of~$A$, 
and it is induced by a Lie symmetry of the corresponding initial equation~$\mathcal L_A$ if and only if $\zeta_{xx}=0$.
Therefore, any essential Lie symmetry generator of~$\mathcal L_A^{\rm r}$ with $\zeta_{xx}\ne0$ 
can be considered as a so-called additional~\cite[Example~3.5]{olve86Ay} (hidden~\cite{abra08a}) symmetry of the equation~$\mathcal L_A$.

\medskip

\noindent\textit{Algebra $\langle P(1)+I(\phi)\rangle$}. 
The corresponding reduction is relevant for values of arbitrary elements 
$A^j=A^j(t)$, $j=2,\dots,\EqOrd-2$, $A^1=0$, $A^0=f(t)x$ ${}\bmod G^\sim$, 
where~$f=\phi_t$, as given for Case~4 in Table~\ref{tab:CompleteGroupClassificationLinearEvolEqs} 
and further cases of Lie symmetry extensions. 
Two functionally independent invariants of the vector field~$P(1)+I(\phi)$ are~$\omega=t$ and~$v=e^{-\phi x}u$. 
The solution ansatz~$u=e^{\phi x}v(\omega)$ then reduces the equation~$\mathcal L_A$ 
to a first-order linear ordinary differential equation,
\[
\mathcal L_A^{\rm r}\colon\quad v_\omega=(\phi^\EqOrd+A^j\phi^j)v,
\]
where the superscripts of~$\phi$ denote exponents, 
and the summation over the repeated index~$j$ from $2$ to $\EqOrd-2$ is implied. 
This equation is readily integrated and its general solution is
\[
v=c_0\exp\left(\int(\phi^\EqOrd+A^j\phi^j)\,{\rm d}\omega\right),
\]
where~$c_0=\const$.
Since the equation~$\mathcal L_A^{\rm r}$ is of order one, 
its maximal Lie invariance algebra is infinite-dimensional 
and consists of vector fields that are, in general, not fiber-preserving. 
Therefore, the equation~$\mathcal L_A$ admits infinitely many hidden symmetries related to this reduction. 

\begin{remark}
 The above two reductions seem quite elementary due to classifying reductions up to $G^\sim$-equivalence. 
 At the same time, reductions of specific equations with respect to their one-dimensional Lie symmetry algebras are generally more involved. 
\end{remark}

Each Lie symmetry vector field $Q\in\mathfrak g_A$ of the equation~$\mathcal L_A$ 
generates a one-parameter point-symmetry group of this equation. 
Point symmetry transformations of~$\mathcal L_A$ are of the form~\eqref{eq:PointTransformationBetweenEvolEqs}, 
where the function~$U^0$ runs through the solution set of~$\mathcal L_A$,
and $\tilde A^l(\tilde t, \tilde x)=A^l(\tilde t,\tilde x)$ in~\eqref{eq:PointTransformationBetweenEvolEqsAE1}. 
These transformations can be used for the construction of new solutions from known ones, including linear superposition of solutions. 
For~$\mathcal L_A$ as linear equation, Lie symmetry vector fields can also be used directly 
for the same purpose in the following way~\cite[Section~7.14]{ovsi82Ay}: 
Given a vector field $Q=\tau\p_t+\xi\p_x+(\phi u+\eta^0)\p_u\in\mathfrak g_A$ and a known solution~$u=h(t,x)$ of~$\mathcal L_A$, 
the function $u=Q[h]:=\phi h+\eta^0-\tau h_t-\xi h_x$ is also a solution of~$\mathcal L_A$.
Proposition~\ref{pro:Inequiv1DSubalgOfEssSPanAlg} implies that, 
up to $G^\sim$-equivalence and for relevant values of~$A$, 
such nontrivial actions on solutions are exhausted by $u\to -u_t$ and $u\to\phi(t)u-u_x$.

A new procedure for generating exact solutions of linear systems using their Lie symmetries was suggested in~\cite{fush96a}. 
It can be considered as the inversion of the above actions on solutions by Lie symmetry vector fields. 
It can also be interpreted as the Lie reduction that involves vector fields associated with linear superposition of solutions.  
Later, a particular case of this procedure was considered in~\cite{broa99a}. 
Given a nonzero vector field $Q=\tau\p_t+\xi\p_x+\phi u\p_u\in\mathfrak g^{\rm ess}_A$ 
and a known solution~$u=h(t,x)$ of~$\mathcal L_A$ (therefore, $h\p_u\in\mathfrak g^{\rm lin}_A$), 
consider the vector field $Q_h=Q+h\p_u\in\mathfrak g_A$. 
Looking for $Q_h$-invariant solutions of~$\mathcal L_A$, 
we first solve the equation $\phi u-\tau u_t-\xi u_x=-h$ to construct an ansatz for~$u$, 
reduce the equation~$\mathcal L_A$ by this ansatz
and then integrate the obtained reduced equation~$\mathcal L_A^{\rm r}$, 
which is an inhomogeneous linear ordinary differential equation.  
In view of Proposition~\ref{pro:Inequiv1DSubalgOfEssSPanAlg}, 
there are two $G^\sim$-inequivalent realizations of this procedure 
for relevant equations from the class~\eqref{eq:LinearHomogeneousGaugeEvolEqs},

\bigskip\par\noindent
$A^l=A^l(x),\ A^{\EqOrd-1}=0,\ A^\EqOrd=1;\quad Q=D(1)\colon$
\begin{gather*}
u=v(\omega)+\int_{t_0}^th(t',x)\,{\rm d}t',\ \omega=x,\quad
\mathcal L_A^{\rm r}\colon\quad 
v_n+\sum_{l=0}^{\EqOrd-2}A^lv_l=h\big|_{t=t_0};
\end{gather*}

\noindent
$A^0=\phi_t(t)x,\ A^1=0,\ A^j=A^j(t),\ j=2,\dots,\EqOrd-2,\ A^{\EqOrd-1}=0,\ A^\EqOrd=1;\quad Q=P(1)+I(\phi)\colon$
\begin{gather*}
u=e^{\phi(t)x}v(\omega)+e^{\phi(t)x}\int_{x_0}^xe^{-\phi(t)x'}h(t,x')\,{\rm d}x',\ \omega=t, 
\\ 
\mathcal L_A^{\rm r}\colon\quad 
v_\omega=\left(\,\sum_{k=2}^\EqOrd A^k\phi^k\right)v+e^{-\phi x_0}\sum_{k=2}^\EqOrd A^k\sum_{i=0}^{k-1}h_i\big|_{x=x_0}\phi^{k-1-i},
\end{gather*}
where the superscripts of~$\phi$ again denote exponents. 
Integrating the reduced equation~$\mathcal L_A^{\rm r}$ for the second realization, 
we derive an explicit nonlocal formula for generating a new solution from 
a known one, 
\begin{gather*}
u=e^{\phi(t)x}\int_{x_0}^xe^{-\phi(t)x'}h(t,x')\,{\rm d}x'
\\
\phantom{u=}{}+
\left(
\int_{t_0}^t\sum_{k=2}^\EqOrd A^k(t')\sum_{i=0}^{k-1}\big(\phi(t')\big)^{k-i-1}h_i(t',x_0)
e^{-\phi(t')x_0-w(t')}\,{\rm d}t'+v_0
\right)e^{\phi(t)x+w(t)}, 
\end{gather*}
where $\displaystyle w(t):=\int_{t_0}^t\sum_{k=2}^\EqOrd A^k(t')\big(\phi(t')\big)^k\,{\rm d}t'$.

The above procedure can be iterated using various~$Q$'s. 
In particular, starting with $h=0$ and using the same $Q=D(1)$ for each iteration, 
we construct solutions that are polynomials in~$t$ with coefficients depending on~$x$, 
\begin{equation}\label{eq:ReductionOfLinEvolEqsWRT_D(1)^N}
u=\sum_{s=0}^N v^s(x)t^s, \quad\mbox{where}\quad
v^s_\EqOrd+\sum_{l=0}^{\EqOrd-2}A^lv^s_l=(s+1)v^{s+1},\ s=0,\dots,N,\quad v^{N+1}:=0.
\end{equation}

Another use of Lie symmetries of a linear system of differential equations 
for finding its exact solutions   
is based on generating higher-order (generalized) infinitesimal symmetries of this system 
by its Lie symmetries \cite[Proposition 5.22]{olve86Ay}.
For any generalized symmetry of a system~$\mathcal L$ of differential equations, 
the associated invariant surface condition (i.e., the condition of vanishing of its characteristic) 
is a differential constraint formally consistent with the system~$\mathcal L$.
Given an equation~$\mathcal L_A$ from the class~\eqref{eq:LinearHomogeneousGaugeEvolEqs} 
with $\mathfrak g^{\rm ess}_A\ne\{0\}$, 
each nonzero vector field $Q=\tau\p_t+\xi\p_x+\phi u\p_u\in\mathfrak g^{\rm ess}_A$ 
corresponds to the recursion operator $\mathrm Q=-\tau\mathrm D_t-\xi\mathrm D_x+\phi$ 
of the equation~$\mathcal L_A$, i.e., 
the operator~$\mathrm Q$ maps the set of generalized-symmetry characteristics of~$\mathcal L_A$ 
to itself. 
Subsequently acting by recursion operators associated with various basis elements of~$\mathfrak g^{\rm ess}_A$ on the characteristic~$u$ of the trivial infinitesimal symmetry~$I(1)$ 
and linearly combining results of such actions, 
one obtains the set of characteristics of ``linear'' generalized symmetries generated by Lie symmetries. 
If a generalized symmetry is generated by iterative use of a single Lie symmetry in the above way, 
the general solution of the corresponding invariant surface condition, which is in fact equivalent to a linear ordinary differential equation with constant coefficients, gives 
an ansatz with several new unknown functions of a single new independent variable. 
This ansatz reduces the initial equation to a linear system of ordinary differential equations 
for the new unknown functions. 
The construction of solutions of the form~\eqref{eq:ReductionOfLinEvolEqsWRT_D(1)^N} can be interpreted 
as example of such generalized reductions,  
where $Q=D(1)$, $\mathrm Q=-\mathrm D_t$ and the invariant surface condition is $\mathrm Q^{N+1}u=0$. 
In general, essential (up to $G^\sim$-equivalence and linear superposition of solutions) 
generalized Lie-symmetry-related reductions of equations from the class~\eqref{eq:LinearHomogeneousGaugeEvolEqs} 
are exhausted by the reductions presented below, 
which arise from the factorization of general polynomials of~$\mathrm Q$ 
over the corresponding underlying field (complex or real).
For each case we list the corresponding Lie symmetry vector field, recursion operator, 
invariant surface condition and reduced system of ordinary differential equations. 
Here $N\in\mathbb N_0$, 
the index~$s$ runs from~0 to~$N$, 
the index~$l$ runs from~0 to $\EqOrd-2$,
and we assume the summation over the repeated indices.
The constant~$\lambda$ is from the underlying field ($\mathbb R$ or~$\mathbb C$),  
$\mu$ and $\nu$ are real constants, and $\nu>0$.

\bigskip\par\noindent
$A^l=A^l(x),\quad A^{\EqOrd-1}=0,\quad A^\EqOrd=1$:
\begin{gather*}
Q=D(1),\quad \mathrm Q=-\mathrm D_t,\quad (\mathrm Q+\lambda)^{N+1}u=0,
\\[1ex] \qquad
u=v^s(x)t^se^{\lambda t},\\[1ex]\qquad
v^s_\EqOrd+A^lv^s_l=(s+1)v^{s+1}+\lambda v^s,\quad v^{N+1}:=0;
\\[1.5ex]
(\mbox{over }\mathbb R)\quad 
Q=D(1),\quad \mathrm Q=-\mathrm D_t,\quad ((\mathrm Q+\mu)^2+\nu^2)^{N+1}u=0,
\\[1ex] \qquad
u=\big(v^s(x)\cos(\nu t)+w^s(x)\sin(\nu t)\big)t^se^{\mu t}, 
\\[1ex] \qquad
v^s_\EqOrd+A^lv^s_l=(s+1)v^{s+1}+\mu v^s+\nu w^s,\quad v^{N+1}:=0,
\\ \qquad
w^s_\EqOrd+A^lw^s_l=(s+1)w^{s+1}-\nu v^s+\mu w^s,\quad w^{N+1}:=0.
\end{gather*}

\medskip\par\noindent
$A^0=\phi_t(t)x,\quad A^1=0,\quad A^j=A^j(t),\ j=2,\dots,\EqOrd-2,\quad A^{\EqOrd-1}=0,\quad A^\EqOrd=1$:
\begin{gather*}
Q=P(1)+I(\phi),\quad \mathrm Q=-\mathrm D_x+\phi,\quad (\mathrm Q+\lambda)^{N+1}u=0,
\\ \qquad
u=v^s(t)x^se^{(\phi+\lambda)x},\quad
v^s_t=\sum_{k=2}^\EqOrd A^k\sum_{p=s}^{\min(N,k+s)}\binom k{p-s}\frac{p!}{s!}(\phi+\lambda)^{k+s-p}v^p;
\\[1.5ex]
(\mbox{over }\mathbb R)\quad 
Q=P(1)+I(\phi),\quad \mathrm Q=-\mathrm D_x+\phi,\quad ((\mathrm Q+\mu)^2+\nu^2)^{N+1}u=0,
\\[1ex] \qquad
u=\big(v^s(t)\cos(\nu x)+w^s(t)\sin(\nu x)\big)x^se^{(\phi+\mu)x}, 
\\[1ex] \qquad
v^s_t=\sum_{k=2}^\EqOrd A^k\sum_{p=s}^{\min(N,k+s)}\binom k{p-s}\frac{p!}{s!}(\Phi^{skp}v^p+\Psi^{skp}w^p),
\\ \qquad
w^s_t=\sum_{k=2}^\EqOrd A^k\sum_{p=s}^{\min(N,k+s)}\binom k{p-s}\frac{p!}{s!}(-\Psi^{skp}v^p+\Phi^{skp}w^p),
\\[1ex] \qquad\mbox{where}\quad 
\Phi^{skp}=\sum_{q=0}^{\left[\frac{k+s-p}2\right]}(-1)^q\binom{k+s-p}{2q}\nu^{2q}(\phi+\mu)^{k+s-p-2q},
\\ \qquad\phantom{\mbox{where}}\quad 
\Psi^{skp}=\sum_{q=0}^{\left[\frac{k+s-p-1}2\right]}(-1)^q\binom{k+s-p}{2q+1}\nu^{2q+1}(\phi+\mu)^{k+s-p-2q-1}.
\end{gather*}
For the above reductions, the parameters~$\lambda$ and~$(\mu,\nu)$ can be set to $0$ and $(0,1)$ 
by $\mathcal I(e^{-\lambda t})$ and $\mathcal D(\nu t)\mathcal I(e^{-\mu t})$ if $\mathrm Q=-\mathrm D_t$
or by re-denoting $\phi+\lambda\to\phi$ and $\phi+\mu\to\phi$ 
with sequentially applying $\mathcal D(\nu^\EqOrd t)$ and/or $\mathcal X$ if $\mathrm Q=-\mathrm D_x+\phi$, 
respectively.
At the same time, the parameters become essential 
when an arbitrary polynomial~$\mathrm P$ of~$\mathrm Q$ is considered.
The reduction with respect to~$\mathrm P$ gives solutions 
that are linear superpositions of solutions constructed for the corresponding maximal powers 
of irreducible multipliers of~$\mathrm P$.

All the above methods for finding exact solutions can be combined with each other.

\section{Conclusion}\label{sec:ConclusionLinearEvolutionEqs}

In this paper we have exhaustively solved the group classification problem 
for the class~\eqref{eq:GeneralLinearEvolutionEquation} of (1+1)-dimensional linear evolution equations 
of arbitrary fixed order $\EqOrd>2$. 
This classification completes (and enhances) the previous group classifications 
of (1+1)-dimensional linear evolution equations of order three and four 
that were tackled in~\cite{gung04a,huan12a}. 
We have computed the equivalence groupoids and equivalence groups 
of the class~\eqref{eq:GeneralLinearEvolutionEquation}, 
of its nested subclasses defined by the gauges $A^\EqOrd=1$ and $(A^\EqOrd,A^{\EqOrd-1})=(1,0)$ 
and of the corresponding counterparts consisting of homogeneous equations. 
Although the above classes of, in general, inhomogeneous equations are normalized, 
the subclass that is the most convenient for group classification 
and that is minimal among subclasses 
whose group classifications are equivalent to the group classification of the entire class~\eqref{eq:GeneralLinearEvolutionEquation} 
is the subclass of linear homogeneous evolution equations of the reduced form~\eqref{eq:LinearHomogeneousGaugeEvolEqs}.
The associated gauge of arbitrary elements is $(A^\EqOrd,A^{\EqOrd-1},B)=(1,0,0)$. 
The fact that the subclass~\eqref{eq:LinearHomogeneousGaugeEvolEqs} 
is uniformly semi-normalized with respect to linear superposition of solutions 
allows us to use a special version of the algebraic method~\cite{kuru16a} 
in order to solve the group classification problem 
for the class~\eqref{eq:GeneralLinearEvolutionEquation} in the optimal way.
Due to the same fact, Lie invariant solutions of equations 
from the subclass~\eqref{eq:LinearHomogeneousGaugeEvolEqs} 
as well as other solutions that are constructed by symmetry-based methods specific for linear equations 
can be classified up to $G^\sim$-equivalence, 
which is discussed in Section~\ref{sec:ExactSolutionsLinEvolEqs}.

It is worth mentioning the difference in the above version of the algebraic method to the classification technique employed in~\cite{gung04a,huan12a}, 
which was proposed in~\cite{zhda99Ay} and applied therein to the group classification of a class of second-order nonlinear evolution equations. 
This technique is implicitly based on the normalization property of a class~$\mathcal L|_{\mathcal S}$ of (systems of) differential equations to be classified, 
and its main steps are the following:
\begin{enumerate}\itemsep=0ex
\item
Compute the equivalence group~$G^\sim$ of~$\mathcal L|_{\mathcal S}$ 
and the span~$\mathfrak g_\spanindex$ of Lie symmetry algebras of equations from the class~$\mathcal L|_{\mathcal S}$.
\item
Construct an optimal list of certain $G^\sim$-equivalent low-dimensional subalgebras 
of~$\mathfrak g_\spanindex$, 
find the subclasses of equations that possesses these subalgebras as their Lie symmetry algebras, and show that up to~$G^\sim$-equivalence 
each higher-dimensional extension is contained in a subclass 
associated with a listed subalgebra of dimension maximal among considered ones.
\item
Solve the particular group classification problem for each of the selected subclasses.
\item
Compose a~classification list for the class~$\mathcal L|_{\mathcal S}$ 
as the union of the lists constructed in the course of solving the above particular problems. 
\end{enumerate}
Since the number of arguments in arbitrary elements parameterizing a selected subclass 
is usually less than that for the class~$\mathcal L|_{\mathcal S}$, 
the group classification problems for subclasses have more chances 
to be solved by the direct integration of the corresponding systems of determining equations 
for the components of Lie symmetry vector fields. 
A~difficulty in implementing the described procedure is created by the appearance of $G^\sim$-equivalent cases in the course of classifying different subclasses. 
Thus, a part of cases that are $G^\sim$-equivalent to other listed cases 
should be excluded from the final joint list. 
At the same time, the use of equivalences in the course of directly solving determining equations 
is more delicate and complicated than within the framework of the algebraic method of group classification. 
The complexity of applying the technique led to weaknesses of derived classifications 
such as the weaknesses of~\cite{gung04a,huan12a} 
discussed in Remark~\ref{rem:ComparisonWithThreeFourOrderCase}.
The first three steps of the classification procedure by~\cite{zhda99Ay} are 
rather simple for (1+1)-dimensional linear evolution equations
but this is not the case for the fourth step. 
In particular, merely one-dimensional subalgebras 
of the span of the essential Lie invariance algebras 
of linear homogeneous evolution equations from the reduced subclass~\eqref{eq:LinearHomogeneousGaugeEvolEqs} 
were classified in the second step in~\cite{gung04a,huan12a}. 
Note that the sufficiency of classifying the essential Lie invariance algebras 
for classes of linear homogeneous equations was not justified in~\cite{gung04a,huan12a}; 
cf.~\cite{kuru16a}.  

In turn, in the present paper we first establish the uniform semi-normalization of the subclass~\eqref{eq:LinearHomogeneousGaugeEvolEqs} with respect to linear superposition of solutions, 
followed by classifying all subalgebras of the projection~$\pi_*\mathfrak g^\sim$
of the equivalence algebra of the subclass~\eqref{eq:LinearHomogeneousGaugeEvolEqs}
that are appropriate for the solution of the corresponding group classification problem.
The classification of such subalgebras is possible due to 
the preceding derivation of constraints for them 
in the series of Lemmas~\ref{lem:DimOfEssLieInvAlgebra}--\ref{lem:OnOperatorsInvolvingTau}, 
including the least upper bound for their dimension, which equals four. 
Among the constraints for appropriate subalgebras
there are also inequalities for the dimensions of several specific subspaces 
of each appropriate subalgebra. 
These dimensions are $G^\sim$-invariant small nonnegative integers 
and are hence convenient for the marking of classification cases. 
The collection of constraints presented in
Lemmas~\ref{lem:DimOfEssLieInvAlgebra}--\ref{lem:OnOperatorsInvolvingTau} completely defines the set of appropriate subalgebras 
since any subalgebra of~$\pi_*\mathfrak g^\sim$ satisfying these constraints 
is the maximal Lie invariance algebra for an equation of the form~\eqref{eq:LinearHomogeneousGaugeEvolEqs}. 
After constructing an optimal list of appropriate subalgebras of~$\pi_*\mathfrak g^\sim$,
for each subalgebra~$\mathfrak s$ from this list
we substitute the components of basis elements of~$\mathfrak s$ 
into the classifying equations~\eqref{eq:DeterminingEqsLinearEvolEqs1}--\eqref{eq:DeterminingEqsLinearEvolEqs3} 
and integrate the obtained system on the tuple~$A$. 
The obtained general solution should be gauged by elements of 
the stabilizer subgroup of~$G^\sim$ with respect to~$\mathfrak s$.
This strategy avoids arising equivalent cases in the course of group classification.

As referenced through the present paper, 
there are several similarities of this paper with the recent work~\cite{kuru16a} 
on group classification of the class of (1+1)-dimensional linear Schr\"odinger equations with complex potential. 
Both the class~\eqref{eq:LinearHomogeneousGaugeEvolEqs} and the class studied in~\cite{kuru16a} 
are uniformly semi-normalized with respect to linear superposition of solutions. 
The group classification problems for both the classes are hence solved 
by studying certain low-dimensional subalgebras of the projections of the corresponding equivalence algebras. 
At the same time, the preceding description of properties of appropriate subalgebras in~\cite{kuru16a} 
is not complete in contrast to the present paper, 
and thus in~\cite{kuru16a} there is an additional selection of appropriate subalgebras 
in the course of their classification. 

The consideration of second-order evolution equations 
within the framework of group analysis of differential equations 
was recently extended with the study of their reductions operators~\cite{popo08a}, 
local and potential conservation laws and potential symmetries~\cite{popo08Ay}.
The same study may be carried out for an arbitrary order $\EqOrd>2$. 
Some preliminary results were recently obtained in this direction. 
In particular, local conservation laws and simplest potential conservation laws 
of equations from the class~\eqref{eq:GeneralLinearEvolutionEquation} 
were described in~\cite{popo10d} and in~\cite{boyk11a}, respectively. 

\medskip

\noindent{\bf Acknowledgments.}
The authors thank the anonymous reviewer for several helpful remarks.
This research was undertaken, in part, thanks to funding from the Canada Research Chairs program 
and the NSERC Discovery Grant program. 
The research of ROP was supported by the Austrian Science Fund (FWF), project P25064.

\footnotesize\setlength{\itemsep}{0ex}

\end{document}